\theoremstyle{definition}
\newtheorem{proposition}{Proposition}
\theoremstyle{definition}
\theoremstyle{definition}
\theoremstyle{definition}
\newtheorem{remark}{Remark}
\theoremstyle{definition}
\theoremstyle{definition}
\theoremstyle{definition}
\newtheorem{theorem}{Theorem}
\theoremstyle{definition}
\newtheorem{assumption}{Assumption}
\theoremstyle{definition}
\theoremstyle{definition}
\theoremstyle{definition}
\title{\LARGE \bf Adaptive Control for Flow and Volume Regulation in Multi-Producer District Heating Systems}
\author{Juan E. Machado, Michele Cucuzzella, Nina Pronk, and Jacquelien~M.~A.~Scherpen
	\thanks{J. E. Machado, M. Cucuzzella, N. Pronk and J. M. A. Scherpen are with the Jan C. Willems Center for Systems and Control, ENTEG, Faculty of Science and
Engineering, University of Groningen, Nijenborgh 4, 9747 AG Groningen, the
Netherlands (email: \{j.e.machado.martinez, j.m.a.scherpen\}@rug.nl, n.s.pronk@student.rug.nl). M. Cucuzzella is also with the Department of Electrical, Computer and Biomedical Engineering, University of Pavia, via Ferrata 5, 27100 Pavia, Italy (email: michele.cucuzzella@unipv.it).}
}%
\begin{document}

\maketitle

\begin{abstract}
Flow and storage volume regulation is essential for the adequate transport and management of energy resources in district heating systems.  In this letter, we propose a  novel and  suitably tailored---decentralized---adaptive control scheme addressing this problem whilst offering closed-loop stability guarantees. We focus on a system configuration comprising multiple heat producers, consumers and storage tanks exchanging energy through a common distribution network, which are features of modern and prospective district heating installations. {\color{black}The proposed controller is based on passivity, backstepping and (indirect) adaptive control theory.}
\end{abstract}

\section{Introduction}


District Heating (DH) comprises a network of insulated pipes which  transport heated fluid, carrying thermal power from heating stations (producers) towards clusters of consumers within a neighborhood, town center or city \cite{Lund2014}. To  further  unlock the potential of these systems for a more sustainable heating sector, prospective installations will substantially increase the share of renewable energy sources (e.g., geothermal or solar thermal), waste heat sources from industrial or commercial buildings, as well as thermal storage units, promoting as a consequence DH installations featuring  multiple---potentially distributed---heat producers  and distribution networks of meshed topology \cite{Lund2014}, \cite{Dominkovic2017}, \cite{vesterlund_optim_2017}, \cite{wang_meshed_17}.

The effective distribution of heat and  management of energetic resources in DH systems strongly depends on the adequate regulation of the system temperatures, pressures and flows \cite{Werner2017}, \cite{vandermeulen_control_review_18}.  The design of control systems to tackle these objectives is significantly challenging due to the nonlinear, networked, and uncertain nature of DH systems models \cite{Scholten_tcst_2015}, \cite{DePersis2011}. The control of  DH systems with a \emph{single} heat producer has received considerable attention. In \cite{DePersis2014} global asymptotic end-user (consumer) pressure regulation was addressed via decentralized proportional-integral controllers (c.f., \cite{DePersis2011}), whereas temperature and storage volume regulation was achieved in \cite{Scholten_tcst_2015} via a novel internal model controller. Temperature control was also investigated using Lyapunov-Krasovskii theory  in \cite{bendtsen_control_2017} for a system model which accounts for  non-negligible  delays in the heat transport from producer and consumers. For the case of DH system with \emph{multiple} heat producers, a number of works have focused on design and operational optimization, e.g., \cite{Dominkovic2017}, \cite{vesterlund_optim_2017} and \cite{wang_optimization_2017}. The use of predictive control was investigated in \cite{sandou_predictive_05}  for optimal system operation. However, the implementation requires system-wide measurements (of temperatures) and no formal stability analysis is presented. The optimal regulation of flow networks was addressed in \cite{Trip2019a}, including as a particular case a class of simplified DH systems with storage units. Even though closed-loop stability is guaranteed under some conditions, the system model neglects friction effects on pipelines, which are significant in these applications.

In this letter, we propose   a novel adaptive controller for flow and storage volume regulation of a \emph{multi}-producer DH system. {\color{black}The control scheme is   decentralized, as  each  control input depends only on locally available information, and is based on backstepping,  adaptive and passivity-based control design tools.}  {\color{black}We consider a nonlinear and uncertain system model which accounts the effects associated with friction in pipes and is suitable to describe general distribution network topologies.} Conditions for closed-loop asymptotic stability are also presented.


{\bf Notation:}  $\mathbb{R}$ denotes the set of real numbers. For a vector $x\in\mathbb{R}^n$,  $x_i$ represents its $i$th component, {i.e.}, $x=[x_1,\dots, x_n]^\top$ and   $\vert x \vert = [\vert x_1 \vert,\dots, \vert x_n \vert]^\top$. An $m\times n$ matrix with all-zero entries is written as $\boldsymbol{0}_{m\times n}$. An $n$-vector of ones is written as $\boldsymbol{1}_n$, whereas the identity matrix of size $n$ is represented by $I_n$. For any vector $x\in \mathbb{R}^n$, we denote by $\langle x \rangle$  a diagonal matrix with elements $x_i$ in its main diagonal. For any time-varying signal $w$, we represent by $\bar w$ its steady-state value, if exists.

\section{Background and problem formulation}\label{sec:2}

In this section we introduce the  multi-producer DH system under consideration and formulate the flow and storage volume regulation control problem.  

\subsection{System model}

\begin{figure}[t]
\begin{center}
\includegraphics[width=0.8\linewidth]{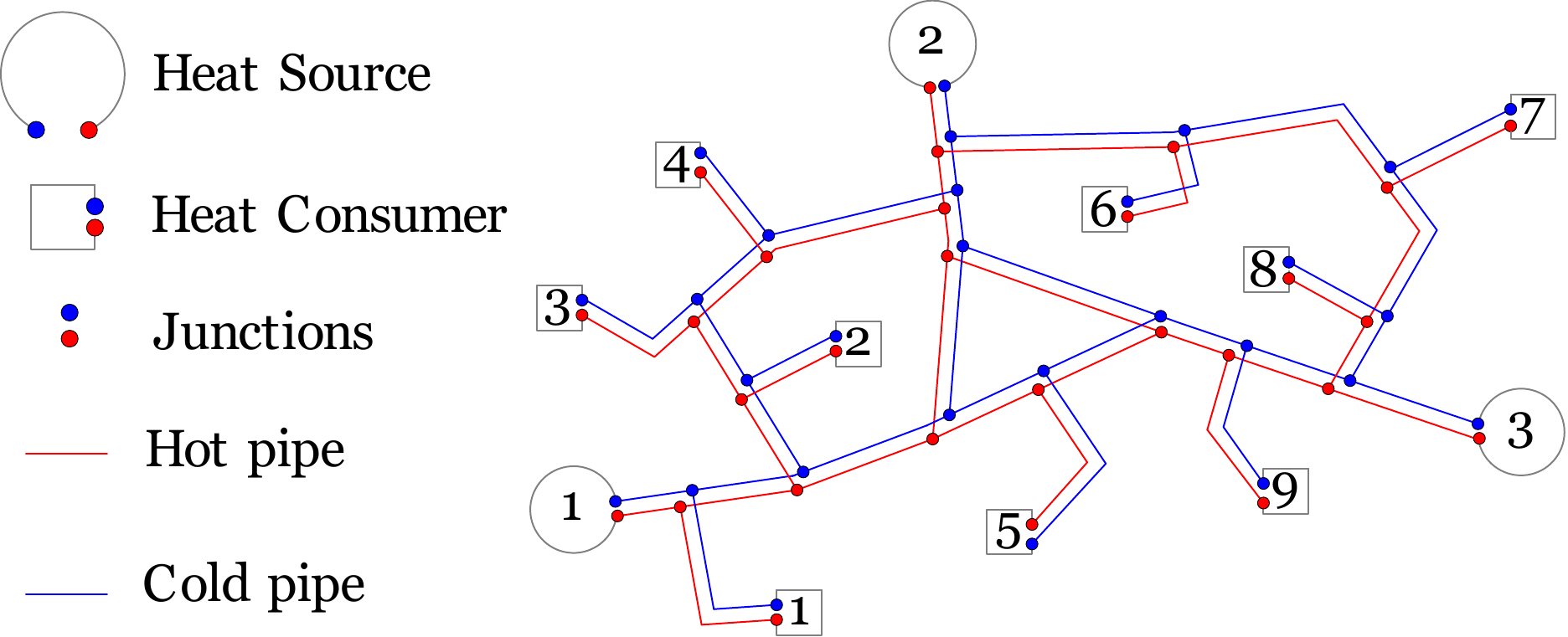}
\caption{Sketch of a simplified DH system (c.f.  \cite{wang_meshed_17}).}
\label{fig:1}
\end{center}
\end{figure}

We consider the hydraulic system of a water-based (leak free) DH system comprising $n_\mathrm{pr}$ heat producers, $n_\mathrm{c}$ consumers and $n_\mathrm{ST}$ storage tanks which are  connected to a common distribution network. The latter is assumed to be symmetric in the sense that  supply and return layers, which respectively transport hot and cold water,  have the same topology. In  Fig.~\ref{fig:1} a simplified DH system with three producers and nine consumers is shown.\footnote{\color{black}It is assumed that water is incompressible and that its density $\rho$ is constant. All system pipes are assumed to be cylindrical.} 

In this work,   producers, consumers and distribution network are assumed to be composed of elementary hydraulic devices, namely,  valves, pipes and pumps. {\color{black}Producers are assisted by hydraulic pumps to deliver thermal power to the system by circulating and heating water through heat exchangers (viewed here as pipes): cold water is continuously drawn from the return layer of the distribution network which is then  heated and injected back into the supply layer.} The operation mode of consumers is analogous to that of producers.  Storage tanks  accumulate volumes of hot and cold water  which are perfectly separated by a  thermocline, i.e.,  hot water is on top and cold water at the bottom, and without heat exchange between them. In addition, each tank is considered to have four valves, two at the top and two at the bottom, which are used as inlets and outlets of hot and cold water, respectively.  The foregoing description is schematically depicted  in Fig.~\ref{fig:2}; for more details see \cite{Scholten_tcst_2015}.

\begin{figure}[t]
\begin{center}
\includegraphics[width=0.8\linewidth]{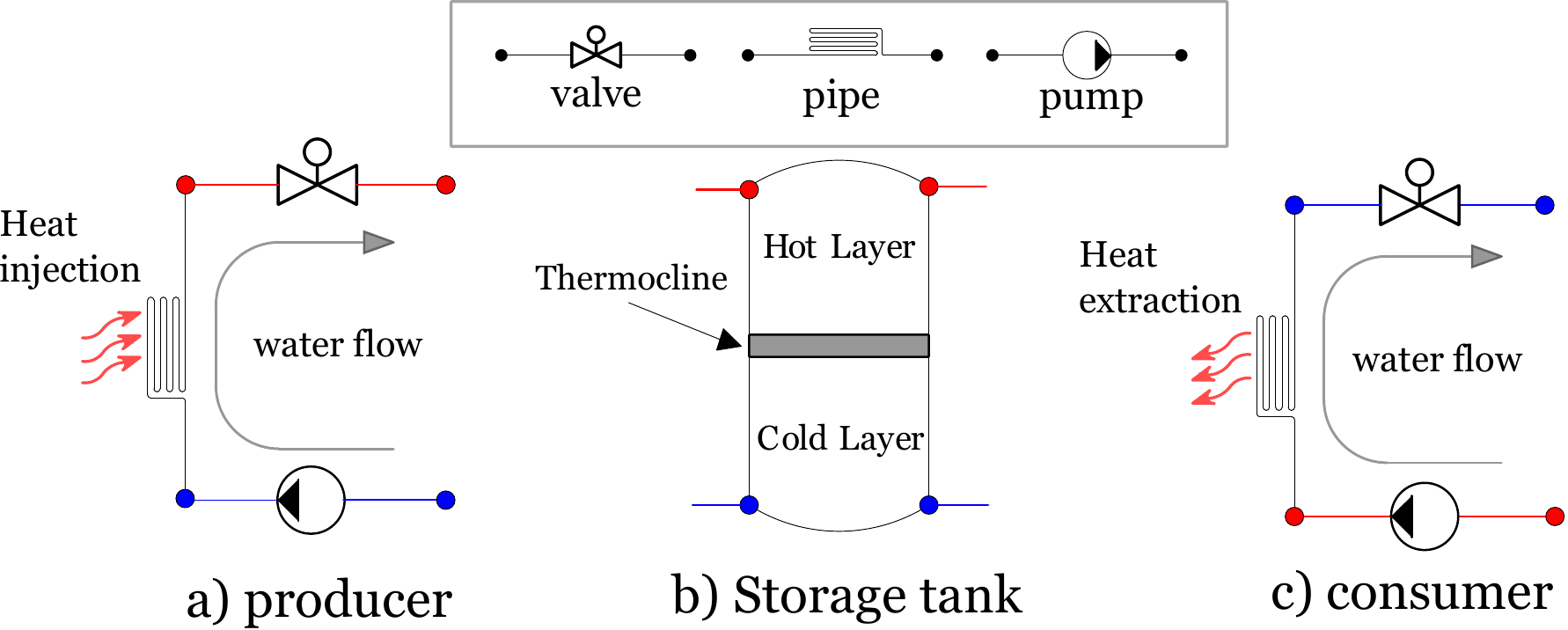}
\caption{Topologies of producers, consumers and storage tanks; see \cite{Scholten_tcst_2015,DePersis2011}. Pipes of producers and producers  represent heat exchangers.}
\label{fig:2}
\end{center}
\end{figure}

The DH system is henceforth viewed as the connected graph $\mathcal{G}=(\mathcal{N},\mathcal{E})$. The nodes $\mathcal{N}$ are all the system junctions as well as the hot and cold layers of the storage tanks. All two terminal devices{\color{black}, namely, pumps, pipes and valves,} are represented by the set of edges $\mathcal{E}$. {\color{black}Each edge is assumed to have an arbitrary and fixed orientation and this  is codified through the node-edge incidence matrix $\mathcal{B}_0$. }  For any edge $i\in\mathcal{E}$,  $q_{\mathrm{E},i}$ and $V_{\mathrm{E},i}$ are the flow through it and the volume of water in it, respectively. Also, $V_{\mathrm{N},k}$ and $p_{\mathrm{N},k}$ are the volume and pressure of a given node $k\in\mathcal{N}$. {\color{black} The cardinalities of $\mathcal{E}$ and $\mathcal{N}$ are denoted by $n_\mathrm{E}$ and $n_\mathrm{N}$, respectively.}  

Basic models describing the dynamic behavior of the flows $q_{\mathrm{E},i}$ and the volumes $V_{\mathrm{N},k}$ are presented next.

\subsubsection{Dynamics of Edges}

For every edge $i\in \mathcal{E}$, with terminals $j,k\in\mathcal{N}$, we relate the rate of change of the flow through it with the pressure drop across it via the differential-algebraic equation \cite{DePersis2011}
\begin{equation}\label{eq:dae_flows}
p_{\mathrm{N},j}-p_{\mathrm{N},k}=J_{\mathrm{E},i}\dot{q}_{\mathrm{E},i}+f_{\mathrm{E},i}(q_{\mathrm{E},i})-w_{\mathrm{E},i},
\end{equation}
{\color{black}where the nodes $j\neq k$ are the endpoints of edge $i$.} In \eqref{eq:dae_flows},  if $i$ represents a pipe, then  $J_{\mathrm{E},i}>0$ is a constant depending on its physical dimensions, {\color{black} such as length and cross-sectional area}. {\color{black}If $i$ is a pump, then $w_i$ denotes the pressure difference that $i$ produces  across its terminals.} The function $f_{\mathrm{E},i}$ is  assumed to be continuously differentiable, mononotically increasing and its explicit form may be unknown or depend on uncertain parameters. {\color{black}Indeed, for a given pipe $i\in \mathcal{E}$, the function $f_{\mathrm{E},i}$ models the pressure drop caused by the friction between the pipe's interior wall and the stream of water;  through $f_{\mathrm{E},i}$ we also model the pressure drop caused by valves.

In Section~\ref{sec:reduced_hydr} we recall results from \cite{jmc_dh_modeling_2020} to associate \eqref{eq:dae_flows} to an equivalent ODE-based model. Consider first the following additional details on $f_{\mathrm{E},i}$:

\begin{assumption}\label{assu:f_E_pipe}
For each $i\in \mathcal{E}$ representing a pipe or a valve, the function $f_\mathrm{E,i}$ in \eqref{eq:dae_flows} is given by
\begin{equation}\label{eq:model_f_E}
f_{\mathrm{E},i}(q_{\mathrm{E},i})=\theta_i \vert q_{\mathrm{E},i}\vert q_{\mathrm{E},i},
\end{equation}
where $\theta_i$ is an {\em unknown} positive scalar. If $i\in \mathcal{E}$ is a pump, then we take $\theta_i=0$.
\end{assumption}
}

{\color{black}
\begin{remark}
For pipes, the model \eqref{eq:model_f_E} is related to the Darcy-Weisbach formula and is widely used in the literature (see, {e.g.}, \cite{wang_meshed_17, grosswindhager_delay_11,palsson_book_99,Hauschild2020}) to describe pressure drops in hydraulic systems  due to the effects of friction in pipes.  The coefficient $\theta_i$, which we treat as an unknown positive scalar,  is related to the pipe's length $\ell_i$, internal diameter $d_i$ and the  friction factor $k_i$ through the expression $\theta_i=(k_i\ell_i \rho)/(2d_i)$. The values of $\ell_i$ and $d_i$ may be accurately known (or measured). However, the friction factor $k_i$ depends on the pipe's roughness $\epsilon_i$, which is difficult to measure or whose measurement may considerable deviate from its true value due to, e.g., aging or corrosion. The friction factor $k_i$ is related to $\epsilon_i$  through the Colebrook equation, given by \cite{cengel_thermo_08}
\begin{align*}
\tfrac{1}{\sqrt{k_i}} = -2\log_{10}\left(\tfrac{\epsilon_i/d_i}{3.7}+\tfrac{2.51}{\mathrm{Re}_i  \sqrt{k_i}} \right);~~~\mathrm{Re}_i = \tfrac{\rho \vert q_{\mathrm{E},i} \vert }{(\pi d_i/4) \nu},
\end{align*} 
where  $\mathrm{Re}_i$  is the Reynolds number for turbulent flow. We note that  $\mathrm{Re}_i$ depends on the flow $q_{\mathrm{E},i}$ and the viscosity of water $\nu$, which in turn depends on the fluid's temperature. In this work, out of simplicity we consider that $\mathrm{Re}_i$ is constant, with its value corresponding to that of nominal operating flow and temperature conditions.
\end{remark}
}

{\color{black}
\begin{remark}
For any valve we also use the model \eqref{eq:model_f_E} to describe pressure drop through it as done, {e.g.}, in \cite{valdimarsson_14} and \cite{wang_meshed_17}. However, in this case, the coefficient $\theta_i$ is related with quantities that depend on the characteristic of the specific type of valve, e.g., the {\em rangeability} for equal percentage valves, or other generic parameters such as the (maximum) flow capacity and the opening degree of the valve. Even if the values of some of these parameters are known (from manufacturers data sheet, for example), diverse degradation mechanisms such as corrosion can change these parameters' values over the course of a valve service period. Therefore, $\theta_i$ is also considered to be an unknown positive scalar for valves.  
\end{remark}
}

\subsubsection{Dynamics of Nodes}

{\color{black}
The volume $V_{\mathrm{N},k}$ of every  node $k\in\mathcal{N}$ in the DH system  evolves according to the mass balance equation per node, which in view of the assumption of incompressibility and constant density of water,  is equivalent to  $\dot{V}_{\mathrm{N},k}=\sum_{i\in\mathcal{I}_k}q_{\mathrm{E},i}$, 
where $\mathcal{I}_k$ is the set of edges that are incident to node $k$. Considering the DH system's incidence matrix $\mathcal{B}_0$, the set of all these equations, for all $k\in \mathcal{N}$, can be written as 
\begin{equation}\label{eq:vol_dyn_NR}
\dot{V}_\mathrm{N}=\mathcal{B}_0 q_\mathrm{E},
\end{equation}
where $q_\mathrm{E}\in \mathbb{R}^{n_\mathrm{E}}$ is a vector comprising the flow through every edge in the DH system. We assume that for any $k\in\mathcal{N}$ representing a simple junction,  $V_{\mathrm{N},k}=\delta_k$ for all time, with $\delta_k\geq 0$ constant. Then,  \eqref{eq:vol_dyn_NR} becomes a differential-algebraic equation.  The latter assumption, which stems from the fact that simple junctions are of a much smaller dimension than storage tanks, is useful to write a reduced-order, ODE-based model equivalent to \eqref{eq:vol_dyn_NR} that focuses on the volume dynamics of storage tanks; see Section~\ref{sec:reduced_hydr}. Out of simplicity we take $\delta_k=0$.
}

\subsubsection{A reduced-order ODE-based hydraulic model}\label{sec:reduced_hydr}

{\color{black}
We recall first three instrumental assumptions from \cite{jmc_dh_modeling_2020} to write equivalent ODE-based models of \eqref{eq:dae_flows} and \eqref{eq:vol_dyn_NR}. These are: {\bf (a)} every producer is interfaced to the distribution network through a storage tank as depicted in Fig.~\ref{fig:3}; {\bf (b)}~the total volume of water in each tank remains constant and at maximum capacity for all time; and {\bf (c)} there are no standalone storage tanks in the system. The reader is referred to Section~2.1 of \cite{jmc_dh_modeling_2020} for details.
}

{\color{black}
Regarding the flow dynamics \eqref{eq:dae_flows}, we summarize the results in \cite{jmc_dh_modeling_2020} as follows. Considering assumptions {\bf (a)}, {\bf (b)} and {\bf (c)}, the overall DH system's flow vector $q_\mathrm{E}$ is completely determined by the flow through a selected number of devices. More precisely, there exists a constant matrix $\mathcal{F}$ of the form
\begin{equation}\label{eq:fundamental_loop}
\mathcal{F}=\begin{bmatrix}
I_{n_\mathrm{ch}} & 0 & G & 0 \\
0 & I_{n_\mathrm{pr}} & 0 & H
\end{bmatrix},~ G_{ij},H_{ij}\in \{-1,0,1 \},
\end{equation} 
such that
\begin{equation}\label{eq:q_E}
q_\mathrm{E}=\mathcal{F}^\top \begin{bmatrix}
q_\mathrm{ch}\\
q_\mathrm{pr}
\end{bmatrix},
\end{equation}
where $q_\mathrm{pr}\in \mathbb{R}^{n_\mathrm{pr}}$ comprises the flows through each producer and  $q_\mathrm{ch}\in \mathbb{R}^{n_\mathrm{ch}}$  stacks the flows through each consumer, through some pipes of the distribution network (one per loop), and  the flow at the hot layer's outlet pipe of each storage tank (except for one). Then,  $n_\mathrm{ch}=n_\mathrm{c}+a+n_\mathrm{pr}-1$, where $a\geq 0$ denotes the number of loops in the distribution network.\footnote{{\color{black}The elements of $q_\mathrm{ch}$ and $q_\mathrm{pr}$ represent a set of independent variables. In the example of Fig.~\ref{fig:3}, $q_\mathrm{ch}=(q_{\mathrm{E},1},q_{\mathrm{E},2})$, $q_\mathrm{pr}=(q_{\mathrm{E},3},q_{\mathrm{E},4})$ and $a=0$. The remaining flows are dependent on these as $q_{\mathrm{E},6}=q_{\mathrm{E},2}-q_{\mathrm{E},1}$ and, due to assumption~{\bf(b)},  $q_{\mathrm{E},5}=q_{\mathrm{E},2}$ and $q_{\mathrm{E},7}=q_{\mathrm{E},6}$.}}

Moreover, $q_\mathrm{ch}$ and $q_\mathrm{pr}$ satisfy the following set of decoupled ODEs:
\vspace{-2.5mm}
\begin{subequations}\label{eq:flow_ODE}
\begin{align}
J_\mathrm{ch}\dot{q}_\mathrm{ch} & = f_\mathrm{ch}(q_\mathrm{ch})+u_\mathrm{ch}, \label{eq:flow_ODE_qch}\\
J_\mathrm{pr}\dot{q}_\mathrm{pr} & = f_\mathrm{pr}(q_\mathrm{pr})+u_\mathrm{pr}, \label{eq:flow_ODE_qpr}
\end{align}
\end{subequations}
where the matrix $J_\mathrm{ch}$ is symmetric, positive definite, and $J_\mathrm{pr}$ is diagonal and positive definite as well; both matrices are constant and depend on the parameters $J_{\mathrm{E},i}$ in \eqref{eq:dae_flows}.  Also,  $u_{\mathrm{ch},i}$ and $u_{\mathrm{pr},i}$ are independent control inputs and represent the pressure difference of hydraulic pumps in series with the devices whose flows are $q_{\mathrm{ch},i}$ and $q_{\mathrm{pr},i}$, respectively. Moreover,  $-f_\mathrm{ch}$ and $-f_\mathrm{pr}$, which are associated with $f_{\mathrm{E},i}$ in \eqref{eq:dae_flows} (see also \eqref{eq:model_f_E}), are nonlinear, continuously differentiable and {\em monotone} mappings. Notably, each component of $f_{\mathrm{pr},i}$ can be written as
\begin{equation}\label{eq:f_pr}
f_{\mathrm{pr},i}(q_{\mathrm{pr},i})=-\theta_i \vert q_{\mathrm{pr},i} \vert  q_{\mathrm{pr},i},~~i=1,...,n_\mathrm{pr}.
\end{equation}
This fact is fundamental to our developments in Section~\ref{sec:3}. We point out that the explicit form of $f_{\mathrm{ch},i}$ is not necessary to establish the main results and conclusions of this work, nonetheless, some additional details appear in Remark~\ref{rem:add_details_fch} below.\footnote{The validity of \eqref{eq:flow_ODE} relies on two additional assumptions concerning the topology of the distribution network and the placement of the system's pumps. Including these assumptions here would require a considerable amount of additional background information that we omit due to space constraints. All the details can be found in \cite{jmc_dh_modeling_2020}.}
}
{\color{black}
\begin{remark}
Strictly speaking $f_{\mathrm{pr},i}$ represents the direct sum of the pressure drops caused by the series connection of a given producer's pipe and valve (see Fig.~\ref{fig:2}).  Thus,   $f_{\mathrm{pr},i}(q_{\mathrm{pr},i})=-\left(\sum_{j\in \mathcal{P}_i}\theta_j\right) \vert q_{\mathrm{pr},i} \vert  q_{\mathrm{pr},i}$, where $\mathcal{P}_i\subset \mathcal{E}$ comprises the valve and the pipe associated to the $i$th producer. Out of simplicity we use the coefficient $\theta_i$ in \eqref{eq:f_pr}, however this does not affect our conclusions since both $\theta_i$ and $\left(\sum_{j\in \mathcal{P}_i}\theta_j\right)$ are unknown positive constants for any $i=1,...,n_\mathrm{pr}$. 
\end{remark}
}

\bigskip

\noindent {\color{black}We move on to the nodes' volume dynamics of \eqref{eq:vol_dyn_NR}. By recalling that the DH system's graph is connected, it follows that $1_{n_\mathrm{N}}^\top \dot{V}_\mathrm{N}=1_{n_\mathrm{N}}^\top \mathcal{B}_0 q_\mathrm{E}=0$.  Let us respectively denote by $V_{\mathrm{sh},i}$ and $V_{\mathrm{sc},i}$ the volume of water in the hot and cold layers of the storage tank $i$, and recall that we have assumed $V_{\mathrm{N},k}=0$ (constant) for each simple junction $k\in\mathcal{N}$. Then, $1_{n_\mathrm{N}}^\top \dot{V}_\mathrm{N}=0$ implies $\sum_{i=1}^{n_\mathrm{ST}}\dot{V}_{\mathrm{sh},i}+\dot{V}_{\mathrm{sc},i}=0$ for all time. Thus,  volume is preserved in the DH system and, notably, each  storage tank is completely filled with water at all times, provided that the same condition is met at $t=0$, in satisfaction of assumption {\bf (b)}. The latter assumption is commonly found in related literature and is usually understood as the desired operation mode of this type of devices (see, {e.g.}, \cite{Scholten_tcst_2015,verda_storage_11,kamal_storage_97}).} {\color{black}It follows  that by considering  \eqref{eq:q_E} and  assumptions {\bf (a)} and {\bf (c)}, then the model \eqref{eq:vol_dyn_NR} can be reduced to the following ODE:
\begin{equation}\label{eq:Vsh_dyn}
\dot{V}_\mathrm{sh}=q_\mathrm{pr}-Bq_\mathrm{ch},
\end{equation}
where  $V_\mathrm{sh}\in \mathbb{R}^{n_\mathrm{ST}}$, with $n_\mathrm{ST}=n_\mathrm{pr}$, comprises the volumes of water in  the hot layers of the storage tanks, and   $B$ is an appropriate sub-block of matrix $\mathcal{F}$  in \eqref{eq:q_E} (thus $B_{ij}\in \{-1,0,1 \}$). We remark that $(Bq_\mathrm{ch})_i$ represents  the flow at the hot layer's outlet of the storage tank to which we associate $V_{\mathrm{sh},i}$ and  that, necessarily,  $\dot{V}_\mathrm{sc}=Bq_\mathrm{ch}-q_\mathrm{pr}.$
}

\begin{figure}[t]
\begin{center}
\includegraphics[width=0.9\linewidth]{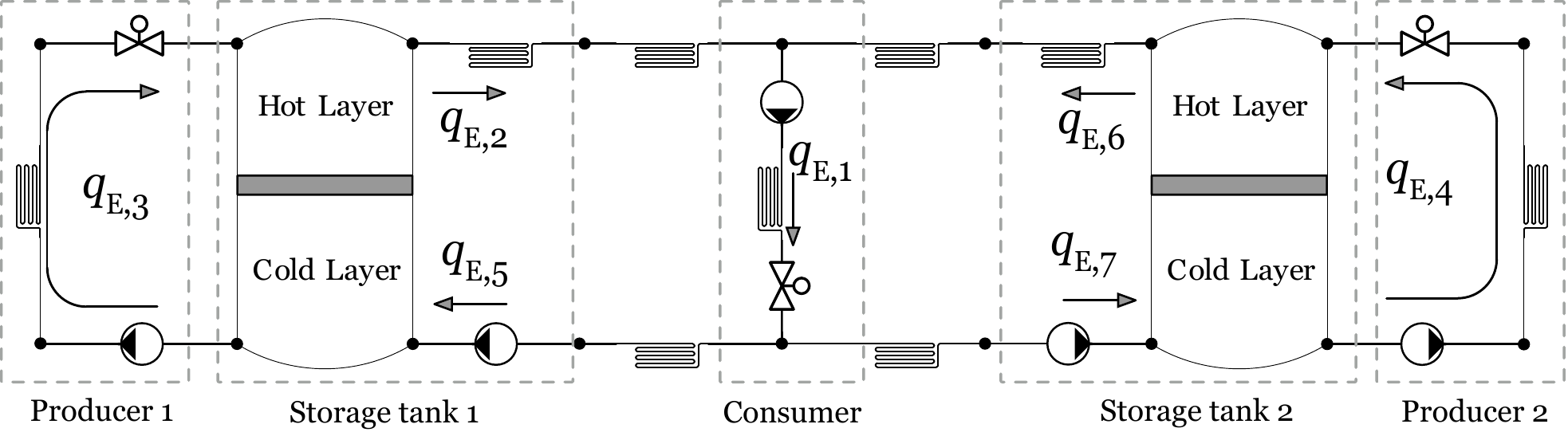}
\caption{Schematic diagram of a simplified DH system with two producers and one consumer. Each producer is interfaced to the distribution network through a storage tank. Only the flows through selected devices are shown.}
\label{fig:3}
\end{center}
\end{figure}

{\color{black}
Before continuing to this paper's next section, a number of remarks are in order:

}

{\color{black}
\begin{remark}\label{rem:add_details_fch}
The matrix $\mathcal{F}$ introduced in \eqref{eq:fundamental_loop} is related to the {\em fundamental loop matrix} of the DH system's graph. Then, equation \eqref{eq:q_E} corresponds to the  hydraulic analogous of Kirchhoff's current law.  Also, the analogy between pressures and voltages allows us to invoke Kirchhoff's voltage law analogous for hydraulic networks, which states  that the sum of the pressure drops  across any fundamental loop of the DH system's graph is zero, or equivalently, that
\begin{equation}\label{eq:kirchh_volt}
\mathcal{F}\Delta P_\mathrm{E}=0,
\end{equation}
where each component  $\Delta P_{\mathrm{E},i}$ corresponds to the pressure drop across any edge $i\in \mathcal{E}$, then $\Delta P_{\mathrm{E},i}$ is equivalent to \eqref{eq:dae_flows}.  It follows that through an adequate (and consistent) ordering of the components of $q_\mathrm{E}$, $f_\mathrm{E}$ and $\mathcal{F}$,  the decoupled, ODE-based dynamics \eqref{eq:flow_ODE} can be obtained from \eqref{eq:dae_flows} via the algebraic constraints \eqref{eq:q_E} and \eqref{eq:kirchh_volt}. Moreover, the  mappings $f_\mathrm{ch}$ and $f_\mathrm{pr}$ are given by
\begin{equation*}
\begin{bmatrix}
f_\mathrm{ch}(q_\mathrm{ch})\\
f_\mathrm{pr}(q_\mathrm{pr})
\end{bmatrix}=-\mathcal{F}f_\mathrm{E}(q_\mathrm{E})\vert_{q_\mathrm{E}=\mathcal{F}^\top\begin{bmatrix}
q_\mathrm{ch}\\
q_\mathrm{pr}
\end{bmatrix}}.
\end{equation*}
The mappings  $f_\mathrm{ch}$ and $f_\mathrm{pr}$ are continuously differentiable and, since  $\mathcal{F}$ is a full row rank matrix and each $f_{\mathrm{E},i}$ associated to a pipe (and a valve) is a monotone function, then $-f_\mathrm{ch}$ and $-f_\mathrm{pr}$ are monotone mappings \cite{jmc_dh_modeling_2020}. Out of completeness we indicate that, for each $i=1,...,n_\mathrm{ch}$, we have that
\begin{align*}
f_{\mathrm{ch},i}(q_\mathrm{ch}) & =-\theta_i \vert q_{\mathrm{ch},i} \vert q_{\mathrm{ch},i} \nonumber\\
&~~ -\sum_{j=1}^{n_\mathrm{g}} G_{ij} \theta_{\small j+n_\mathrm{pr}+n_\mathrm{ch}} \sum_{k=1}^{n_\mathrm{ch}} \vert G_{ki}q_{\mathrm{ch},k} \vert  G_{ki}q_{\mathrm{ch},k},
\end{align*}
with	 $n_\mathrm{g}$ denoting the number of columns of matrix $G$ in \eqref{eq:fundamental_loop}. Also, for each $i=1,...,n_\mathrm{pr}$,
\begin{align*}
f_{\mathrm{pr},i}(q_\mathrm{pr}) & =-\theta_{\small i+n_\mathrm{ch}} \vert q_{\mathrm{pr},i} \vert q_{\mathrm{pr},i} \nonumber\\
&~~ -\sum_{j=1}^{n_\mathrm{h}} H_{ij} \theta_{\small j+n_\mathrm{ch}+n_\mathrm{pr}+n_\mathrm{g}} \sum_{k=1}^{n_\mathrm{pr}} \vert H_{ki}q_{\mathrm{pr},k} \vert  H_{ki}q_{\mathrm{pr},k},
\end{align*}
where $n_\mathrm{h}$ is the number of columns of matrix $H$ in \eqref{eq:fundamental_loop}.  Due to the considered topology to interface each producer to the distribution network (see Fig.~\ref{fig:3}), the above expression for $f_{\mathrm{pr},i}$ can be simplified to the expression in \eqref{eq:f_pr}. See \cite[Section~2.1]{jmc_dh_modeling_2020} for details.
\end{remark}
}

{\color{black}

\begin{remark}
Since we consider that  producers and consumers have same topology (see Fig.~\ref{fig:2}), then it is possible to interface any number of producers directly to the distribution network without a storage tank. It can be shown that in such circumstances   the analysis and main conclusions of this work would not change if assumption {\bf (a)} is discarded. We emphasize nonetheless that the combination of assumptions {\bf (a)}, {\bf (b)} and {\bf (c)} allows us to decouple  the dynamics between $q_\mathrm{pr}$ and $q_\mathrm{ch}$, to write each mapping $f_{\mathrm{pr},i}$ in \eqref{eq:f_pr} only in terms of  $q_{\mathrm{pr},i}$, and  making  $q_\mathrm{pr}$ in the right-hand side of \eqref{eq:Vsh_dyn}  appear with an identity coefficient matrix. The relaxation of assumption {\bf (c)} is part of our ongoing research.
\end{remark}

}

\subsection{Problem formulation}

This letter is concerned with the objective of simultaneously regulating the  DH system's  flow and volume vectors $q_\mathrm{ch}$ and $V_\mathrm{sh}$ towards desired constant setpoints $q_\mathrm{ch}^\star$ and $V_\mathrm{sh}^\star$, respectively. More precisely, our goal is that the overall flow and volume dynamics conformed by \eqref{eq:flow_ODE} and \eqref{eq:Vsh_dyn} attain
\begin{align*}
\lim_{t\rightarrow \infty}q_\mathrm{ch}=q^\star_\mathrm{ch},~\mathrm{and}~\lim_{t\rightarrow \infty}V_\mathrm{sh}=V^\star_\mathrm{sh},
\end{align*}
for an identifiable  set of initial conditions. 

Concerning the regulation of $q_\mathrm{ch}$,  we note that the heat transport from producers to consumers  depends strongly on the flows through the distribution network; in particular,  consumers usually regulate their heat demand and temperature indirectly by adjusting their flow. The regulation  of $V_\mathrm{sh}$ on the other hand is relevant for augmenting or reducing the stored (useful) energy in a tank, as the latter is proportional to the volume in the hot layer of the tank \cite{Scholten_tcst_2015}. Therefore, the achievement	 of the  above-mentioned objective is fundamental  for the correct transport and management of the DH system's energy resources.

{\color{black}To achieve the desired objective, we  design for each of the system inputs $u_{\mathrm{ch},i}$ and $u_{\mathrm{pr},i}$, decentralized and dynamic control laws of the form
\begin{equation}\label{eq:dynamic_control_generic}
\begin{aligned}
\dot{x}_{\mathrm{c},i} & = g_{\mathrm{c},i}\left(\xi_{\mathrm{c},i}, x_{\mathrm{c},i} \right),\\
u_{\mathrm{c},i} & = h_{\mathrm{c},i}\left(\xi_{\mathrm{c},i}, x_{\mathrm{c},i} \right),
\end{aligned}
\end{equation}
	where $\mathrm{c} \in \{\mathrm{ch}, \mathrm{pr}\}$, $x_{\mathrm{c},i}$ is the state of the controller and $\xi_{\mathrm{c},i}$ is a vector comprising signals and parameters available to $u_{\mathrm{c},i}$. By decentralized we mean that $u_{\mathrm{c},i}$ uses information that is locally available at its associated heat producer, consumer or distribution pipe. More precisely, we assume  that 
\begin{equation}\label{eq:known_data}
\begin{aligned}
\xi_{\mathrm{ch},i} & = \begin{bmatrix} q_{\mathrm{ch},i}, & q_{\mathrm{ch},i}^\star \end{bmatrix}^\top,\\
\xi_{\mathrm{pr},i} & = \begin{bmatrix}
q_{\mathrm{pr},i}, &  V_{\mathrm{sh},i}, & \left(Bq_\mathrm{ch} \right)_i, & J_{\mathrm{pr},i}, & V_{\mathrm{sh},i}^\star
\end{bmatrix}^\top.
\end{aligned}
\end{equation}
\begin{remark}\label{rem:known_data}
We recall  that $(Bq_\mathrm{ch})_i$ represents the flow at the $i$th tank's hot layer outlet (see Fig.~\ref{fig:3}) and we assume that it can be measured locally by the $i$th producer. Then, we underscore  that the knowledge of $B$  is not needed to establish our main results in Section~\ref{sec:3} and it is not needed to compute $\xi_{\mathrm{ch},i}$ nor $\xi_{\mathrm{pr},i}$. The same applies to each $\theta_i$, as it has been assumed to be an unknown positive scalar (see \eqref{eq:model_f_E} and \eqref{eq:f_pr}).
\end{remark}
}

\section{Volume regulation controller}\label{sec:3}

In this section we detail the aspects of the proposed solution to the formulated problem.  In view of the  cascade structure of the open-loop  flow and volume dynamics  \eqref{eq:flow_ODE} and \eqref{eq:Vsh_dyn}, we first present a decentralized, proportional-integral controller for the stabilization of the subsystem \eqref{eq:flow_ODE_qch}. The latter controller is inspired by the results of \cite{DePersis2014} (see also \cite[Sec. 3.1]{jmc_dh_modeling_2020}) {\color{black} addressing end-user {\em pressure} regulation in {\em single}-producer DH systems {\em without} storage units}.  Afterwards, we focus on the remaining dynamics, i.e., in \eqref{eq:flow_ODE_qpr} and \eqref{eq:Vsh_dyn},  and propose a novel adaptive control scheme to achieve volume regulation of the storage tanks.

\begin{proposition}\label{prop:stab_q_ch}
Consider the following controller of the form introduced in \eqref{eq:dynamic_control_generic}:
\begin{equation}\label{eq:control_q_ch_1}
\begin{aligned}
\dot{x}_\mathrm{ch} &  = -M_\mathrm{ch}\left(q_\mathrm{ch}-q^\star_\mathrm{ch} \right)\\
u_\mathrm{ch} & = -N_\mathrm{ch}\left(q_\mathrm{ch}-q^\star_\mathrm{ch} \right)+x_\mathrm{ch},
\end{aligned}
\end{equation}
where $M_\mathrm{ch}$ and $N_\mathrm{ch}$ are constant, positive definite diagonal matrices. Then, the subsystem \eqref{eq:flow_ODE_qch} in closed-loop with \eqref{eq:control_q_ch_1}   admits a globally  asymptotically stable equilibrium point. Moreover, $\lim_{t\rightarrow \infty}q_\mathrm{ch}=q^\star_\mathrm{ch}$.
\end{proposition}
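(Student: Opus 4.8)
The plan is to treat the proportional–integral law \eqref{eq:control_q_ch_1} as a standard stabilizer of the (incrementally) passive subsystem \eqref{eq:flow_ODE_qch}, and to certify global asymptotic stability with a quadratic Lyapunov function together with LaSalle's invariance principle. First I would locate the equilibrium: setting $\dot x_\mathrm{ch}=0$ forces $q_\mathrm{ch}=q_\mathrm{ch}^\star$, since $M_\mathrm{ch}$ is invertible; then $\dot q_\mathrm{ch}=0$ together with $u_\mathrm{ch}=x_\mathrm{ch}$ at $q_\mathrm{ch}=q_\mathrm{ch}^\star$ yields the unique value $\bar x_\mathrm{ch}=-f_\mathrm{ch}(q_\mathrm{ch}^\star)$. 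Hence the closed loop has the single equilibrium $(q_\mathrm{ch},x_\mathrm{ch})=(q_\mathrm{ch}^\star,-f_\mathrm{ch}(q_\mathrm{ch}^\star))$.

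Next I would pass to the error coordinates $\tilde q=q_\mathrm{ch}-q_\mathrm{ch}^\star$ and $\tilde x=x_\mathrm{ch}-\bar x_\mathrm{ch}$. Using $f_\mathrm{ch}(q_\mathrm{ch}^\star)+\bar x_\mathrm{ch}=0$, the closed-loop dynamics become
\begin{align*}
J_\mathrm{ch}\dot{\tilde q} &= \big(f_\mathrm{ch}(q_\mathrm{ch}^\star+\tilde q)-f_\mathrm{ch}(q_\mathrm{ch}^\star)\big)-N_\mathrm{ch}\tilde q+\tilde x,\\
\dot{\tilde x} &= -M_\mathrm{ch}\tilde q.
\end{align*}
As Lyapunov function I would take $W(\tilde q,\tilde x)=\tfrac12\tilde q^\top J_\mathrm{ch}\tilde q+\tfrac12\tilde x^\top M_\mathrm{ch}^{-1}\tilde x$, which is positive definite and radially unbounded because $J_\mathrm{ch}$ is symmetric positive definite and $M_\mathrm{ch}$ is diagonal positive definite (so $M_\mathrm{ch}^{-1}$ is positive definite). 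Differentiating along the trajectories, the cross terms $\tilde q^\top\tilde x$ and $-\tilde x^\top\tilde q$ cancel, leaving
\begin{equation*}
\dot W = \tilde q^\top\big(f_\mathrm{ch}(q_\mathrm{ch}^\star+\tilde q)-f_\mathrm{ch}(q_\mathrm{ch}^\star)\big)-\tilde q^\top N_\mathrm{ch}\tilde q.
\end{equation*}
Here I would invoke the structural property recalled in Section~\ref{sec:reduced_hydr}, namely that $-f_\mathrm{ch}$ is a monotone mapping, so $\tilde q^\top\big(f_\mathrm{ch}(q_\mathrm{ch}^\star+\tilde q)-f_\mathrm{ch}(q_\mathrm{ch}^\star)\big)\le 0$, whence $\dot W\le -\tilde q^\top N_\mathrm{ch}\tilde q\le 0$.

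Finally I would close the argument with LaSalle's invariance principle. Since $W$ is radially unbounded and nonincreasing, all trajectories are bounded and converge to the largest invariant set contained in $\{\dot W=0\}$. Because $N_\mathrm{ch}$ is positive definite, $\dot W=0$ implies $\tilde q=0$; on any trajectory with $\tilde q\equiv 0$ one has $\dot{\tilde q}\equiv 0$, and the first error equation then forces $\tilde x\equiv 0$. Thus the largest invariant set is the origin, so the shifted system is globally asymptotically stable; equivalently, the closed loop \eqref{eq:flow_ODE_qch}--\eqref{eq:control_q_ch_1} has a globally asymptotically stable equilibrium and $\lim_{t\to\infty}q_\mathrm{ch}=q_\mathrm{ch}^\star$. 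I do not anticipate a serious obstacle; the only points that need care are the exact cancellation of the cross terms (which is why $x_\mathrm{ch}$ must be weighted by $M_\mathrm{ch}^{-1}$ in $W$) and applying monotonicity of $-f_\mathrm{ch}$ rather than of $f_\mathrm{ch}$ itself.
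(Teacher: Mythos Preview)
Your proof is correct and follows essentially the same approach as the paper: the same equilibrium identification, the same quadratic storage function $\tfrac12\tilde q^\top J_\mathrm{ch}\tilde q+\tfrac12\tilde x^\top M_\mathrm{ch}^{-1}\tilde x$, the same use of monotonicity of $-f_\mathrm{ch}$ to bound $\dot W\le -\tilde q^\top N_\mathrm{ch}\tilde q$, and the same LaSalle argument to conclude global asymptotic stability.
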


\begin{proof}
{\color{black}The  closed-loop system \eqref{eq:flow_ODE_qch}, \eqref{eq:control_q_ch_1}  is given by 
\begin{equation}\label{eq:chords_dyn_CL}
\begin{aligned}
J_\mathrm{ch}\dot{q}_\mathrm{ch} & = f_\mathrm{ch}(q_\mathrm{ch})-N_\mathrm{ch}(q_\mathrm{ch}-q_\mathrm{ch}^\star) +x_\mathrm{ch}\\
\dot{x}_\mathrm{ch} & = -M_\mathrm{ch}(q_\mathrm{ch}-q_\mathrm{ch}^\star).
\end{aligned}
\end{equation}
The right-hand side vector field of this system is continuously differentiable, as $f_\mathrm{ch}$ also is. Moreover, $(\bar{q}_\mathrm{ch},\bar{x}_\mathrm{ch})=(q_\mathrm{ch}^\star, -f_\mathrm{ch}(q_\mathrm{ch}^\star))$ is the unique equilibrium of \eqref{eq:chords_dyn_CL}.} Let us define the following function:
\begin{align}\label{eq:stor_ch}
\mathcal{S}_\mathrm{ch}(q_\mathrm{ch},x_\mathrm{ch})& =\tfrac{1}{2}(q_\mathrm{ch}-\bar{q}_\mathrm{ch})^\top J_\mathrm{ch}(q_\mathrm{ch}-\bar{q}_\mathrm{ch}) \nonumber\\
& +\tfrac{1}{2}\left(x_\mathrm{ch}-\bar{x}_\mathrm{ch}\right)^\top M_\mathrm{ch}^{-1}\left(x_\mathrm{ch}-\bar{x}_\mathrm{ch}\right),
\end{align}
which is positive definite {\color{black}and radially unbounded}.  The time derivative of $\mathcal{S}_\mathrm{ch}$ along the trajectories of \eqref{eq:chords_dyn_CL} satisfies
{\color{black}
\begin{align*}
 \dot{\mathcal{S}}_\mathrm{ch} &  = (q_\mathrm{ch}-\bar q_\mathrm{ch})^\top J_\mathrm{ch}\dot{q}_\mathrm{ch}+ (x_\mathrm{ch}-\bar{x}_\mathrm{ch})^\top M_\mathrm{ch}^{-1} \dot{x}_\mathrm{ch}\\
 & =  - \left(q_\mathrm{ch}-\bar{q}_\mathrm{ch}\right)^\top N_\mathrm{ch}(q_\mathrm{ch}-\bar q_\mathrm{ch}) \\
 &~~~ +(q_\mathrm{ch}-\bar{q}_\mathrm{ch})^\top \left( f_\mathrm{ch}(q_\mathrm{ch})-f_\mathrm{ch}(\bar{q}_\mathrm{ch})\right)\\
 & \leq -(q_\mathrm{ch}-\bar q_\mathrm{ch})^\top N_\mathrm{ch} (q_\mathrm{ch}-\bar q_\mathrm{ch}),
\end{align*}
where to obtain the latter inequality we have used the equilibrium identity $\bar{x}_\mathrm{ch} = -f_\mathrm{ch}(\bar q_\mathrm{ch})$ together with the fact that $(q_\mathrm{ch}-\bar{q}_\mathrm{ch})^\top \left( f_\mathrm{ch}(q_\mathrm{ch})-f_\mathrm{ch}(\bar{q}_\mathrm{ch})\right)\leq 0$, which holds by virtue of  $-f_\mathrm{ch}$ being a monotone mapping (see \cite[Lemma~4]{jmc_dh_modeling_2020}).} {\color{black}We note that $\mathcal{S}_\mathrm{ch}$ is not a Lyapunov function  since its derivative is not (strictly) negative definite with respect to $(\bar{q}_\mathrm{ch},\bar{z}_\mathrm{ch})$. Nonetheless, the only solution of \eqref{eq:chords_dyn_CL} that can stay identically in the set $E=\{(q_\mathrm{ch},x_\mathrm{ch}):~\dot{\mathcal{S}}_\mathrm{ch}=0 \Leftrightarrow q_\mathrm{ch}=\bar{q}_\mathrm{ch}\}$ is the equilibrium $(\bar{q}_\mathrm{ch},\bar{z}_\mathrm{ch})$. Following LaSalle's invariance principle, it is thus concluded that $(\bar{q}_\mathrm{ch},\bar{z}_\mathrm{ch})$ is globally asymptotically stable (see \cite[Theorem~3.5]{khalil_book_NLCONTROL_GLOBAL}).}
\end{proof}

{\color{black}
\begin{remark}
In the proof of Proposition~\ref{prop:stab_q_ch} we have used the fact that $-f_\mathrm{ch}$ is a monotone mapping, however it was not necessary to provide an explicit expression for it. Then, other, possibly more general, models to describe the pressure  drop $f_{\mathrm{E},i}$ in pipes and valves could be used instead of \eqref{eq:model_f_E} in Assumption~\ref{assu:f_E_pipe} (as long as $f_{\mathrm{E},i}$ is a monotone function). Moreover, the monotonicity of $-f_\mathrm{ch}$ also implies that the open-loop system \eqref{eq:flow_ODE_qch} is {\em shifted passive} \cite{nima_scl_19} with  storage function  
\begin{equation*}
\mathcal{H}(q_\mathrm{ch})=\tfrac{1}{2}(q_\mathrm{ch}-\bar{q}_\mathrm{ch})^\top J_\mathrm{ch}(q_\mathrm{ch}-\bar{q}_\mathrm{ch})
\end{equation*}
and passive output $q_{\mathrm{ch}}$. Therefore, along any solution $q_\mathrm{ch}$ of  \eqref{eq:flow_ODE_qch}, the following inequality is satisfied
\begin{equation*}
\dot{\mathcal{H}}(q_\mathrm{ch}) \leq (u_\mathrm{ch}-\bar{u}_\mathrm{ch})^\top (q_\mathrm{ch}-\bar{q}_\mathrm{ch}),
\end{equation*}
for any equilibrium pair $(\bar{u}_\mathrm{ch}, \bar{q}_\mathrm{ch})$. Based on the results of \cite{bayu_scl_07} for the stabilization of nonlinear RLC circuits and of \cite{DePersis2014} for pressure regulation of single-producer DH systems, this observation is the main motivation to propose the simple proportional-integral controller \eqref{eq:control_q_ch_1} to achieve the desired objective of regulating $q_\mathrm{ch}$ towards a constant setpoint. 
\end{remark}
}

\bigskip

\noindent {\color{black}Now, we turn our attention to the problem of  regulating  the volume of hot water of each storage tank towards constant, specified setpoints.  Then, we focus on  the system \eqref{eq:flow_ODE_qpr},  \eqref{eq:Vsh_dyn}, which we write next in a more suitable equivalent form. On the one hand, considering  \eqref{eq:f_pr} (see also Assumption~\ref{assu:f_E_pipe}), the mapping $f_\mathrm{pr}$ in \eqref{eq:flow_ODE_qpr} can be written as
\begin{equation}\label{eq:explicit_fpr}
f_\mathrm{pr}(q_\mathrm{pr})=- W(q_\mathrm{pr}) \theta,
\end{equation}
where $\theta  = \begin{bmatrix}\theta_1,~\theta_2,~\cdots~ \theta_{n_\mathrm{pr}} \end{bmatrix}^\top$  and  $W(q_\mathrm{pr})  = \langle \vert q_{\mathrm{pr},i}\vert  q_{\mathrm{pr},i}  \rangle_{i=1}^{n_\mathrm{pr}}$.  On the other hand, \eqref{eq:Vsh_dyn} can be equivalently written as
\begin{equation}\label{eq:vsh_new}
\dot{V}_\mathrm{sh}=q_\mathrm{pr}-Bq_\mathrm{ch}^\star+\Psi(q_\mathrm{ch}),
\end{equation}
where $B q_\mathrm{ch}^\star$  and  
\begin{equation}\label{eq:Psi}
\Psi(q_\mathrm{ch})=B(q_\mathrm{ch}^\star-q_\mathrm{ch}).
\end{equation}
In view of \eqref{eq:explicit_fpr}-\eqref{eq:Psi},   the system of interest to address storage volume regulation is equivalent to
\begin{subequations}\label{eq:equiv_qpr_vsh}
\begin{align}
J_\mathrm{pr}\dot{q}_\mathrm{pr} & =  -W(q_\mathrm{pr})\theta+ u_\mathrm{pr}\\
\dot{V}_\mathrm{sh} & = q_\mathrm{pr}-Bq_\mathrm{ch}^\star + \Psi(q_\mathrm{ch}),
\end{align}
\end{subequations}
where $W(q_\mathrm{pr})\theta$, $Bq_\mathrm{ch}^\star$ and $\Psi(q_\mathrm{ch})$ act as disturbances. Indeed, we have considered that both $\theta$ and $B$ are unknown to the DH system's input vectors $u_\mathrm{ch}$ and $u_\mathrm{pr}$ (see Remark~\ref{rem:known_data}). Also, the vector  $q_{\mathrm{ch}}^\star$ is not necessarily available to  each $u_{\mathrm{pr},i}$ as there is no communication among producers and consumers.\footnote{Henceforth we assume that  \eqref{eq:flow_ODE_qch} is in closed-loop with \eqref{eq:control_q_ch_1}, then $\Psi(q_\mathrm{ch})$ is a bounded and  vanishing disturbance to \eqref{eq:equiv_qpr_vsh}.}}

In the next proposition, {\color{black}by provisionally neglecting}  the effect of the disturbance  $\Psi(q_\mathrm{ch})$, we present a stabilizing controller for \eqref{eq:equiv_qpr_vsh}.  The proposed, suitably-tailored dynamic controller for $u_\mathrm{pr}$ attains  asymptotic convergence of $V_\mathrm{sh}$ towards a desired constant value and estimates in real-time the unknown parameter vector $\theta$. {\color{black}This result will be fundamental in Theorem~\ref{theorem:objective_1} where, using cascade system arguments, we establish the asymptotic stability of the overall DH system's closed-loop dynamics, with $\Psi(q_\mathrm{ch})$ now acting on \eqref{eq:equiv_qpr_vsh}.}

\begin{proposition}\label{prop:stab_vsh}
Consider the system \eqref{eq:equiv_qpr_vsh} and assume that $\Psi(q_\mathrm{ch})=0$ for all time. Define
\begin{equation}\label{eq:qpr_to_zpr}
z_\mathrm{pr}  := q_\mathrm{pr}-x_\mathrm{a}+N_\mathrm{sh}\left(V_\mathrm{sh}-V_\mathrm{sh}^\star \right)
\end{equation}
and consider the following dynamic controller
\begin{subequations}\label{eq:controller_qpr_vsh}
\begin{align}
\dot{x}_\mathrm{a} & = -M_\mathrm{a}\left(V_\mathrm{sh}-V_\mathrm{sh}^\star \right) \label{eq:dot_xa}\\
\dot{x}_\mathrm{b} & = -M_\mathrm{b} \tilde{W}(z_\mathrm{pr}) z_\mathrm{pr} \label{eq:dot_xb}\\
u_\mathrm{pr} & =\tilde{W}(z_\mathrm{pr})x_\mathrm{b}- \left( J_\mathrm{pr}\left(M_\mathrm{a}-N_\mathrm{sh}^2 \right)+I \right)(V_\mathrm{sh}-V_\mathrm{sh}^\star) \nonumber \\
& ~~~ - \left( J_\mathrm{pr}N_\mathrm{sh}+N_\mathrm{pr} \right)z_\mathrm{pr}+J_\mathrm{pr}N_\mathrm{sh}\left(Bq_\mathrm{ch}-x_\mathrm{a}\right), \label{eq:upr_law}
\end{align}
\end{subequations}
where $N_\mathrm{pr}$, $N_\mathrm{sh}$, $M_\mathrm{a}$ and $M_\mathrm{b}$ are constant, positive definite diagonal matrices, and $\tilde{W}(z_\mathrm{pr}):=W(q_\mathrm{pr})\vert_{q_\mathrm{pr}=z_\mathrm{pr}+x_\mathrm{a}-N_\mathrm{sh}(V_\mathrm{sh}-V_\mathrm{sh}^\star)}$. If the closed-loop system admits an equilibrium such that $\tilde{W}(\bar{z}_\mathrm{pr})$ is not identically zero, then said equilibrium is globally asymptotically stable. Moreover,  $\lim_{t\rightarrow \infty}V_\mathrm{sh}=V_\mathrm{sh}^\star$, where $V_\mathrm{sh}^\star$ is a predefined, constant setpoint, and $\lim_{t\rightarrow \infty}x_\mathrm{b}=\theta$.
\end{proposition}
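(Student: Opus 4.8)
The plan is to recast the closed loop of \eqref{eq:equiv_qpr_vsh}, \eqref{eq:controller_qpr_vsh} (with $\Psi(q_\mathrm{ch})\equiv 0$, hence $Bq_\mathrm{ch}=Bq_\mathrm{ch}^\star$) in suitable error coordinates, exhibit a positive definite, radially unbounded function with negative semidefinite derivative, and then finish with LaSalle's invariance principle as in the proof of Proposition~\ref{prop:stab_q_ch}. Concretely, I would introduce the errors $\tilde{x}_\mathrm{a}:=x_\mathrm{a}-Bq_\mathrm{ch}^\star$, $e:=V_\mathrm{sh}-V_\mathrm{sh}^\star$, $\tilde{\theta}:=x_\mathrm{b}-\theta$, together with $z_\mathrm{pr}$ from \eqref{eq:qpr_to_zpr}. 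Since by construction $\tilde{W}(z_\mathrm{pr})=W(q_\mathrm{pr})$, substituting \eqref{eq:upr_law} into \eqref{eq:equiv_qpr_vsh} and differentiating \eqref{eq:qpr_to_zpr} should, after the terms in $x_\mathrm{a}$, $Bq_\mathrm{ch}^\star$ and the quadratic gain matrices cancel by design, yield the compact error dynamics
\begin{subequations}
\begin{align}
\dot{\tilde{x}}_\mathrm{a} &= -M_\mathrm{a} e,\\
\dot{e} &= z_\mathrm{pr} + \tilde{x}_\mathrm{a} - N_\mathrm{sh} e,\\
J_\mathrm{pr}\dot{z}_\mathrm{pr} &= \tilde{W}(z_\mathrm{pr})\tilde{\theta} - e - N_\mathrm{pr} z_\mathrm{pr},\\
\dot{\tilde{\theta}} &= -M_\mathrm{b}\,\tilde{W}(z_\mathrm{pr}) z_\mathrm{pr}.
\end{align}
\end{subequations}
Obtaining this structure is precisely where the backstepping change of variable \eqref{eq:qpr_to_zpr}, the feedback terms in \eqref{eq:upr_law}, and the indirect-adaptive update \eqref{eq:dot_xb} do their work, so performing this algebra correctly and transparently is the first nontrivial step.

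Next I would characterize the equilibria. From $\dot{\tilde{x}}_\mathrm{a}=0$ and the volume equation one gets $\bar e=0$ and $\bar q_\mathrm{pr}=Bq_\mathrm{ch}^\star$, hence $\tilde{W}(\bar z_\mathrm{pr})=W(Bq_\mathrm{ch}^\star)=:\bar{W}$ is a fixed diagonal matrix. Setting the remaining derivatives to zero and left-multiplying the $z_\mathrm{pr}$-equation by $\bar z_\mathrm{pr}^\top$ (using that $\bar z_\mathrm{pr}^\top\tilde{W}(\bar z_\mathrm{pr})\tilde{\theta}$ vanishes because $\bar{W}\bar z_\mathrm{pr}=0$) gives $\bar z_\mathrm{pr}^\top N_\mathrm{pr}\bar z_\mathrm{pr}=0$, so $\bar z_\mathrm{pr}=0$, $\bar{\tilde{x}}_\mathrm{a}=0$, and $\bar{W}\bar{\tilde{\theta}}=0$; the standing hypothesis that $\tilde{W}(\bar z_\mathrm{pr})=\bar{W}$ is not (identically) zero is then invoked to pin down $\bar{\tilde{\theta}}=0$, i.e., the equilibrium is the origin in error coordinates. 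I would then take
\begin{equation*}
\mathcal{V} = \tfrac12\tilde{x}_\mathrm{a}^\top M_\mathrm{a}^{-1}\tilde{x}_\mathrm{a} + \tfrac12 e^\top e + \tfrac12 z_\mathrm{pr}^\top J_\mathrm{pr} z_\mathrm{pr} + \tfrac12\tilde{\theta}^\top M_\mathrm{b}^{-1}\tilde{\theta},
\end{equation*}
which is positive definite and radially unbounded, and differentiate along the error dynamics. The cross terms $\pm\tilde{x}_\mathrm{a}^\top e$, $\pm e^\top z_\mathrm{pr}$ and—crucially—$\pm z_\mathrm{pr}^\top\tilde{W}(z_\mathrm{pr})\tilde{\theta}$ cancel (the last pair because $\tilde{W}$ is symmetric and the adaptation gain $M_\mathrm{b}$ in \eqref{eq:dot_xb} is matched to it), leaving
\begin{equation*}
\dot{\mathcal{V}} = -e^\top N_\mathrm{sh} e - z_\mathrm{pr}^\top N_\mathrm{pr} z_\mathrm{pr}\ \le\ 0 .
\end{equation*}

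Finally, since $\mathcal{V}$ is positive definite, radially unbounded and non-increasing (and the change of coordinates is an affine bijection, hence these properties transfer back to the original variables), all trajectories are bounded and the equilibrium is stable. For convergence I would apply LaSalle: on $\{\dot{\mathcal{V}}=0\}=\{e=0,\ z_\mathrm{pr}=0\}$, invariance and $\dot e=0$ force $\tilde{x}_\mathrm{a}=0$, so $q_\mathrm{pr}=Bq_\mathrm{ch}^\star$ and thus $\tilde{W}(z_\mathrm{pr})=\bar{W}$ there; then $\dot z_\mathrm{pr}=0$ forces $\bar{W}\tilde{\theta}=0$, and the non-degeneracy of $\tilde{W}(\bar z_\mathrm{pr})$ collapses the largest invariant set to the single equilibrium, yielding global asymptotic stability and hence $V_\mathrm{sh}\to V_\mathrm{sh}^\star$ and $x_\mathrm{b}\to\theta$. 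I expect the main obstacle to be exactly this LaSalle step: recognizing that the volume balance pins down $\bar q_\mathrm{pr}=Bq_\mathrm{ch}^\star$ so that $\tilde{W}$ becomes a constant matrix on the invariant set, which is what lets the stated condition on $\tilde{W}(\bar z_\mathrm{pr})$ eliminate the residual freedom in $\tilde{\theta}$; the bookkeeping needed to derive the clean error dynamics above is lengthy but routine.
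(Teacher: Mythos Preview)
Your proposal is correct and follows essentially the same route as the paper: the same backstepping change of variable \eqref{eq:qpr_to_zpr}, the same closed-loop error dynamics (your four equations coincide with the paper's \eqref{eq:CL_by_comps_VSH}), the same Lyapunov-like function (your $\mathcal{V}$ is exactly the paper's Hamiltonian $\tilde{H}$ written in error coordinates), the same dissipation inequality $\dot{\mathcal{V}}=-e^\top N_\mathrm{sh}e-z_\mathrm{pr}^\top N_\mathrm{pr}z_\mathrm{pr}$, and the same LaSalle argument. The only difference is cosmetic: the paper packages the closed loop in port-Hamiltonian form $\dot X=F(X)\nabla\tilde H(X)$ to highlight the IDA-PBC structure, whereas you work directly in the shifted variables $(\tilde x_\mathrm{a},e,z_\mathrm{pr},\tilde\theta)$.
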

\begin{proof}
Inspired by backstepping control design \cite[Chapter~9]{khalil_book_NLCONTROL_GLOBAL}, we propose first a change of variable from $q_\mathrm{pr}$ to $z_\mathrm{pr}$ as appears in \eqref{eq:qpr_to_zpr}. Then, \eqref{eq:equiv_qpr_vsh}, with $\Psi(q_\mathrm{ch})=0$,  is transformed into the equivalent system
\begin{equation}\label{eq:aux_almostCL}
\begin{aligned}
J_\mathrm{pr}\dot{z}_\mathrm{pr} & = -\tilde{W}(z_\mathrm{pr})\theta+ J_\mathrm{pr}\left(M_\mathrm{a}-N_\mathrm{sh}^2\right)\left(V_\mathrm{sh}-V_\mathrm{sh}^\star\right)\\
& ~~~ +J_\mathrm{pr}N_\mathrm{sh}\left(z_\mathrm{pr}+x_\mathrm{a}-Bq_\mathrm{ch}^\star \right)+ u_\mathrm{pr}\\
\dot{V}_\mathrm{sh} & = z_\mathrm{pr}+x_\mathrm{a}-N_\mathrm{sh}\left(V_\mathrm{sh}-V^\star_\mathrm{sh}\right) -B q_\mathrm{ch}^\star \\
\dot{x}_\mathrm{a} & = -M_\mathrm{a}\left( V_\mathrm{sh}-V_\mathrm{sh}^\star \right).
\end{aligned}
\end{equation}
Substituting \eqref{eq:upr_law} into \eqref{eq:aux_almostCL}, which in addition brings the variable $x_\mathrm{b}$ satisfying \eqref{eq:dot_xb}, illustrates  the  IDA-PBC \cite{ortega_garcia_04} feature of the controller by virtue of  attaining a closed-loop system that we write in Hamiltonian form as follows:
{\color{black}
\begin{equation}\label{eq:CL_by_comps_VSH}
\begin{aligned}
J_\mathrm{pr}\dot{z}_\mathrm{pr} & = -N_\mathrm{pr}z_\mathrm{pr}-(V_\mathrm{sh}-V_\mathrm{sh}^\star)+\tilde{W}(z_\mathrm{pr})(x_\mathrm{b}-\theta )\\
\dot{V}_\mathrm{sh} & = z_\mathrm{pr}-N_\mathrm{sh}(V_\mathrm{sh}-V_\mathrm{sh}^\star)+(x_\mathrm{a}-B q_\mathrm{ch}^\star)\\
M_\mathrm{a}^{-1}\dot{x}_\mathrm{a} & = -(V_\mathrm{sh}-V_\mathrm{sh}^\star)\\
M_\mathrm{b}^{-1}\dot{x}_\mathrm{b} & = -\tilde{W}(z_\mathrm{pr})z_\mathrm{pr},
\end{aligned}
\end{equation}
}
which is equivalent to:
\begin{equation}\label{eq:Sigma_prsh_prime}
\dot{X}=\underbrace{\begin{bmatrix}
-N_\mathrm{pr} & -I & 0 & \tilde{W}(z_\mathrm{pr})\\
I & -N_\mathrm{sh} & I & 0\\
0 & -I & 0 & 0\\
-\tilde{W}(z_\mathrm{pr}) & 0 & 0 & 0
\end{bmatrix}}_{=:{F}(X)} \nabla \tilde{H}(X),
\end{equation}
with state vector $X=\left(J_\mathrm{pr}z_\mathrm{pr},V_\mathrm{sh}, M_\mathrm{a}^{-1} x_\mathrm{a}, M_\mathrm{b}^{-1} x_\mathrm{b}  \right)$ and Hamiltonian
\begin{align*}
\tilde{H}= \frac{1}{2}\left(X-\bar{X} \right)^\top\text{block.diag}\left(J_\mathrm{pr}^{-1},I, M_\mathrm{a}, M_\mathrm{b} \right) \left( X-\bar{X} \right),
\end{align*}
where 
\begin{equation}\label{eq:bar X}
\bar X= (0,V^\star_\mathrm{sh}, M_\mathrm{a}^{-1}B q^\star_\mathrm{ch},M_\mathrm{b}^{-1} \theta)
\end{equation}
is a constant vector.
{\color{black} Considering the definition of the diagonal matrix $\tilde{W}(z_\mathrm{pr})$, it is straightforward to see that the right-hand side vector field of the ODE \eqref{eq:Sigma_prsh_prime} is  continuously differentiable. Also,  we  have assumed that $\tilde{W}(\bar{z}_\mathrm{pr})$ is not identically zero, then ${F}(\bar{X})$ in \eqref{eq:Sigma_prsh_prime} has full rank, implying that $X=\bar X$ is  the unique equilibrium point of \eqref{eq:Sigma_prsh_prime}.} We  underscore  that the {\em real-time} estimation of $\theta$, which  is obtained  from $x_\mathrm{b}$, represents  the adaptive aspect of the proposed controller ({c.f.}, \cite[Example~4]{nageshrao_ph_adaptive_16}),  provided that \eqref{eq:Sigma_prsh_prime} is asymptotically stable. Next  we show, using  LaSalle's invariance principle, that $\bar X$ {is} globally asymptotically stable. {\color{black}Consider the Hamiltonian $\tilde{H}$ and observe that it is positive definite with respect to $\bar X$ and radially unbounded with respect to $X$.} Moreover, along the solutions of \eqref{eq:Sigma_prsh_prime} we have that:
{\color{black}
\begin{align*}
 \dot{\tilde{H}}  & =  \left(\nabla \tilde{H}(X)\right)^\top \dot{X}\\
& = \frac{1}{2}\left(\nabla \tilde{H}(X)\right)^\top \left({F}(X)+\mathcal{F}^\top (X) \right)\nabla \tilde{H}(X) \\
& = -z_\mathrm{pr}^\top N_\mathrm{pr} z_\mathrm{pr}-(V_\mathrm{sh}-V^\star_\mathrm{sh})^\top N_\mathrm{sh}(V_\mathrm{sh}-V^\star_\mathrm{sh}) \leq 0.
\end{align*}
We see that   $\dot{\tilde{H}}$ may be zero at values of $X\neq \bar X$, implying that $\tilde{H}$ is not a Lyapunov function. However, we note from \eqref{eq:CL_by_comps_VSH}  that no solution of this system can stay in  $S=\{X:~\dot{\tilde H}(X)=0\Leftrightarrow z_\mathrm{pr}=0,~V_\mathrm{sh}=V_\mathrm{sh}^\star \}$ other than $\bar{X}$. {\color{black} Indeed, let $X_S$ be an arbitrary solution of \eqref{eq:CL_by_comps_VSH} that remains in $S$ for all $t\geq 0$, then $z_\mathrm{pr}=0$ and $V_\mathrm{sh}=V_\mathrm{sh}^\star$ for all $t\geq 0$. Since the diagonal matrix $\tilde{W}(\bar{z}_\mathrm{pr}=0)$ is different from zero (by assumption) and hence, non singular, it follows that $X_S$ can identically stay in $S$ if and only if $X_S=\bar{X}$ for all $t\geq 0$.} It is concluded then, invoking LaSalle's invariance principle (see \cite[Theorem~3.5]{khalil_book_NLCONTROL_GLOBAL}),  that $\bar X$  is {\em globally} asymptotically stable.}
\end{proof}

Before  presenting the next result, which concerns the asymptotic stability of the overall DH system hydraulic dynamics,  consider the following:

\begin{remark}\label{rem:PE}
{\color{black}The  assumption about the equilibrium of the closed-loop system  \eqref{eq:equiv_qpr_vsh}, \eqref{eq:qpr_to_zpr}, \eqref{eq:controller_qpr_vsh}  satisfying $\tilde{W}(\bar{z}_\mathrm{pr})\neq 0$, which in view of  $\tilde{W}(z_\mathrm{pr}) = W(q_\mathrm{pr})$ is equivalent to $\bar q_{\mathrm{pr},i}\neq0$ (for all $i$), notably guarantees that $x_\mathrm{b}\rightarrow \theta$ asymptotically, {i.e.}, the vector of unknown coefficients $\theta$ can be accurately estimated via the proposed adaptive scheme ({c.f.}, \cite[Example~4]{nageshrao_ph_adaptive_16}), overcoming the  challenging unknown and time-varying disturbance $-W(q_\mathrm{pr})\theta$ acting on \eqref{eq:equiv_qpr_vsh}, which we recall  stems from Assumption~\ref{assu:f_E_pipe}.} {\color{black} Considering \eqref{eq:Vsh_dyn}, it is clear that a necessary and sufficient condition for $\bar{q}_\mathrm{pr}\neq 0$ is that $B\bar{q}_\mathrm{ch}\neq 0$, which is a condition that can potentially be enforced through an adequate choice of the setpoint $q_\mathrm{ch}^\star$ (see Proposition~\ref{prop:stab_q_ch}).} We note also that a steady-state condition in which $\bar q_{\mathrm{pr},i}=0$, for some index $i$, implies that the associated heat producer is not in operation.
\end{remark}

\begin{theorem}\label{theorem:objective_1}
The overall closed-loop flow and volume dynamics of the DH system, described by \eqref{eq:flow_ODE},  \eqref{eq:control_q_ch_1}, \eqref{eq:qpr_to_zpr} and \eqref{eq:controller_qpr_vsh}  has a (locally) asymptotically stable equilibrium, and
 \begin{align}
\lim_{t\rightarrow \infty}q_\mathrm{ch}=q^\star_\mathrm{ch},~\mathrm{and}~\lim_{t\rightarrow \infty}V_\mathrm{sh}=V^\star_\mathrm{sh},
\end{align}
where $q^\star_\mathrm{ch}$ and $V^\star_\mathrm{sh}$ are pre-specified setpoints, provided that $\bar q_{\mathrm{pr},i} \neq 0$, {\color{black}for all} $i=1,2,...n_\mathrm{pr}$.  
\end{theorem}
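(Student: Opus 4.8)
The plan is to exploit the cascade structure of the overall closed loop and to combine the two global results already established. I would first observe that the interconnection is a genuine cascade. The subsystem \eqref{eq:flow_ODE_qch} in closed loop with \eqref{eq:control_q_ch_1}---call it $\Sigma_1$, with state $(q_\mathrm{ch},x_\mathrm{ch})$---is autonomous: its right-hand side involves only $q_\mathrm{ch}$, $x_\mathrm{ch}$ and the constant $q_\mathrm{ch}^\star$. The remaining dynamics \eqref{eq:flow_ODE_qpr}, \eqref{eq:Vsh_dyn} in closed loop with \eqref{eq:qpr_to_zpr}, \eqref{eq:controller_qpr_vsh}---call it $\Sigma_2$, with state $X$ as in \eqref{eq:Sigma_prsh_prime}---depends on $\Sigma_1$ only through the quantity $Bq_\mathrm{ch}=Bq_\mathrm{ch}^\star-\Psi(q_\mathrm{ch})$, see \eqref{eq:Psi}. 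Re-deriving the closed loop of $\Sigma_2$ without imposing $\Psi(q_\mathrm{ch})=0$ shows that it equals \eqref{eq:Sigma_prsh_prime} plus an additive, continuously differentiable term that depends on $q_\mathrm{ch}$ only through $\Psi(q_\mathrm{ch})$ and hence vanishes identically on $\{q_\mathrm{ch}=q_\mathrm{ch}^\star\}$.

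Next I would pin down the equilibrium of the cascade. By Proposition~\ref{prop:stab_q_ch}, $\Sigma_1$ has the unique equilibrium $(\bar q_\mathrm{ch},\bar x_\mathrm{ch})=(q_\mathrm{ch}^\star,-f_\mathrm{ch}(q_\mathrm{ch}^\star))$, at which $\Psi(q_\mathrm{ch})=0$, so the cascade can only rest where $\Sigma_2$ itself is at equilibrium with zero input. Under the standing hypothesis $\bar q_{\mathrm{pr},i}\neq 0$ for all $i$---equivalently, $\tilde W(\bar z_\mathrm{pr})$ nonsingular, see Remark~\ref{rem:PE}---Proposition~\ref{prop:stab_vsh} applies and singles out the unique such equilibrium $\bar X$ of \eqref{eq:bar X}, with $\bar z_\mathrm{pr}=0$, $\bar V_\mathrm{sh}=V_\mathrm{sh}^\star$ and $\bar x_\mathrm{b}=\theta$. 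Stacking $(\bar q_\mathrm{ch},\bar x_\mathrm{ch})$ with $\bar X$ thus gives the (unique) equilibrium of the overall closed loop.

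I would then invoke the cascade-stability argument. Proposition~\ref{prop:stab_q_ch} gives that $(\bar q_\mathrm{ch},\bar x_\mathrm{ch})$ is globally asymptotically stable for $\Sigma_1$, whence $q_\mathrm{ch}(t)\to q_\mathrm{ch}^\star$ and $\Psi(q_\mathrm{ch}(t))\to 0$ along every solution; Proposition~\ref{prop:stab_vsh} gives that $\bar X$ is globally asymptotically stable for $\Sigma_2$ once this vanishing input is zeroed. Since the right-hand side of $\Sigma_2$ is continuously differentiable jointly in $X$ and $q_\mathrm{ch}$, a standard argument for cascaded (equivalently, asymptotically autonomous) systems applies: a converse Lyapunov theorem provides a function $\mathcal V$, defined near $\bar X$ and positive definite with respect to it, whose derivative along the unperturbed $\Sigma_2$ is negative definite; along the cascade one then has, on a neighbourhood of $\bar X$, that $\dot{\mathcal V}$ is bounded above by a negative definite function of $X-\bar X$ plus a term of size $c\,\|\Psi(q_\mathrm{ch})\|$ with $c>0$ a local constant, and since $\|\Psi(q_\mathrm{ch}(t))\|\to 0$ this yields stability and attractivity of $\bar X$ for initial data close enough to the stacked equilibrium. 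Together with $q_\mathrm{ch}\to q_\mathrm{ch}^\star$, this establishes the claimed local asymptotic stability and the limits $\lim_{t\to\infty}q_\mathrm{ch}=q_\mathrm{ch}^\star$, $\lim_{t\to\infty}V_\mathrm{sh}=V_\mathrm{sh}^\star$ (see, e.g., \cite{khalil_book_NLCONTROL_GLOBAL}).

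The main obstacle is exactly this last step. Even though each subsystem is globally asymptotically stable in isolation, a cascade of two globally asymptotically stable systems need not be globally asymptotically stable (the driven subsystem may exhibit finite escape time or peaking), and the LaSalle-based proof of Proposition~\ref{prop:stab_vsh} does not readily deliver an input-to-state-stability estimate for \eqref{eq:Sigma_prsh_prime} with respect to $\Psi(q_\mathrm{ch})$---chiefly because of the state-dependent matrix $\tilde W(z_\mathrm{pr})$ entering the interconnection. This forces the conclusion to be stated locally; promoting it to a global statement would require either an a priori boundedness argument ruling out escape or an explicit ISS bound for \eqref{eq:Sigma_prsh_prime}.
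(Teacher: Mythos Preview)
Your proposal is correct and mirrors the paper's own proof: both identify the cascade $\Sigma_\mathrm{ch}\to\Sigma_\mathrm{pr,sh}$, verify uniqueness of the stacked equilibrium under $\bar q_{\mathrm{pr},i}\neq 0$, and combine the global asymptotic stability results of Propositions~\ref{prop:stab_q_ch} and~\ref{prop:stab_vsh} via a standard cascade argument to obtain local asymptotic stability. The only cosmetic difference is that the paper invokes \cite[Proposition~4.1]{sepulchre_constructive_book} directly, whereas you spell out the underlying converse-Lyapunov reasoning and explicitly flag why the conclusion remains local---an observation the paper relegates to a remark after the proof.
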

\begin{proof}
{\color{black}
Let us identify by $\Sigma_\mathrm{ch}$ the system conformed by \eqref{eq:flow_ODE} and  \eqref{eq:control_q_ch_1}, which we write from \eqref{eq:chords_dyn_CL} as follows:
\begin{equation}\label{eq:Sigma_ch}
\Sigma_\mathrm{ch}:\left\{
\begin{tabular}{rcl}
$J_\mathrm{ch}\dot{q}_\mathrm{ch}$ & = & $f_\mathrm{ch}(q_\mathrm{ch})-N_\mathrm{ch}(q_\mathrm{ch}-q_\mathrm{ch}^\star) +x_\mathrm{ch}$\\
$\dot{x}_\mathrm{ch}$ & = & $-M_\mathrm{ch}(q_\mathrm{ch}-q_\mathrm{ch}^\star).$
\end{tabular}
 \right.
\end{equation}
Also, let $\Sigma_\mathrm{pr,sh}$ denote the dynamics of  \eqref{eq:flow_ODE_qpr}, \eqref{eq:Vsh_dyn}, which is  equivalent to \eqref{eq:equiv_qpr_vsh}, in closed-loop with \eqref{eq:controller_qpr_vsh}. Considering the change of variable (from $q_\mathrm{pr}$ to $z_\mathrm{pr}$) of equation \eqref{eq:qpr_to_zpr}, together with \eqref{eq:Sigma_prsh_prime}, this system is equivalent to
\begin{equation}\label{eq:Sigma_prsh}
\Sigma_\mathrm{pr,sh}:\left\{
\dot{X}={F}(X)\nabla \tilde{H}(X)+\begin{bmatrix}
0_{n_\mathrm{pr}}\\
\Psi(q_\mathrm{ch})\\
0_{n_\mathrm{pr}}\\
0_{n_\mathrm{pr}}
\end{bmatrix},
 \right.
\end{equation}
where $X$, ${F}(X)$ and $\tilde{H}$ are the same as for \eqref{eq:Sigma_prsh_prime}. We note that, compared to \eqref{eq:Sigma_prsh_prime}, the system $\Sigma_\mathrm{pr,sh}$ includes the unknown and bounded disturbance term $\Psi(q_\mathrm{ch})$ defined in \eqref{eq:Psi}. 

From the above developments, we observe that the overall flow and volume dynamics of the DH system (in closed-loop) is given by   $\Sigma_\mathrm{ch} \circ \Sigma_\mathrm{pr,sh}$. Since  $\Sigma_\mathrm{ch}$ is independent of the states of $\Sigma_\mathrm{pr,sh}$, then these subsystems are in cascade as we show in Fig.~\ref{fig:4}. We move on to show that $\Sigma_\mathrm{ch} \circ \Sigma_\mathrm{pr,sh}$ has a unique equilibrium point.  It was shown in Proposition~\ref{prop:stab_q_ch} that $(\bar{q}_\mathrm{ch},\bar{x}_\mathrm{ch})=(q_\mathrm{ch}^\star,-f_\mathrm{ch}(q_\mathrm{ch}^\star))$ is the unique equilibrium of  $\Sigma_\mathrm{ch}$. For $\Sigma_\mathrm{pr,sh}$, we see from \eqref{eq:Sigma_prsh} that if $q_\mathrm{ch}=\bar{q}_\mathrm{ch}$, which implies $\Psi(q_\mathrm{ch})=0$,  then the vector $\bar{X}$, as given in \eqref{eq:bar X}, is a unique equilibrium of $\Sigma_\mathrm{pr,sh}$, provided that ${F}(\bar{X})$ is non singular. In the present case, non singularity of ${F}(\bar{X})$  is guaranteed from the assumption  $\bar{q}_\mathrm{pr}\neq 0$ (see equation~\eqref{eq:Sigma_prsh_prime} and Remark~\ref{rem:PE}).  It follows that $(\bar{q}_\mathrm{ch},\bar{x}_\mathrm{ch},\bar{X})$ is a unique equilibrium point of $\Sigma_\mathrm{ch} \circ \Sigma_\mathrm{pr,sh}$. 

To see that $(\bar{q}_\mathrm{ch},\bar{x}_\mathrm{ch},\bar{X})$ is (locally) asymptotically stable, we note on the one hand that $(\bar{q}_\mathrm{ch},\bar{x}_\mathrm{ch})$ is globally asymptotically stable for $\Sigma_\mathrm{ch}$ (see Proposition~\ref{prop:stab_q_ch}).  On the other hand, it was established in Proposition~\ref{prop:stab_vsh} that if $q_\mathrm{ch}=\bar{q}_\mathrm{ch}$ ($\Rightarrow \Psi(q_\mathrm{ch})=0$), then $\bar X$ is globally asymptotically stable for $\Sigma_\mathrm{pr,sh}$. Thus, we can invoke \cite[Proposition~4.1]{sepulchre_constructive_book} to conclude that the overall (coupled) system $\Sigma_\mathrm{ch} \circ \Sigma_\mathrm{pr,sh}$,  with $\Psi(q_\mathrm{ch})$ acting as an exogenous {\em vanishing} disturbance on $\Sigma_\mathrm{pr,sh}$,    admits $(\bar{q}_\mathrm{ch},\bar{x}_\mathrm{ch},\bar{X})$ as a unique, locally asymptotically stable equilibrium point. Moreover,  $\lim_{t\rightarrow \infty} q_\mathrm{ch}=q^\star_\mathrm{ch}$ and $\lim_{t\rightarrow \infty}V_\mathrm{sh}=V^\star_\mathrm{sh}$, provided that the system's initial conditions are sufficiently close to the equilibrium.
}
\end{proof}

\begin{figure}[h]
\begin{center}
\includegraphics[width=0.55\linewidth]{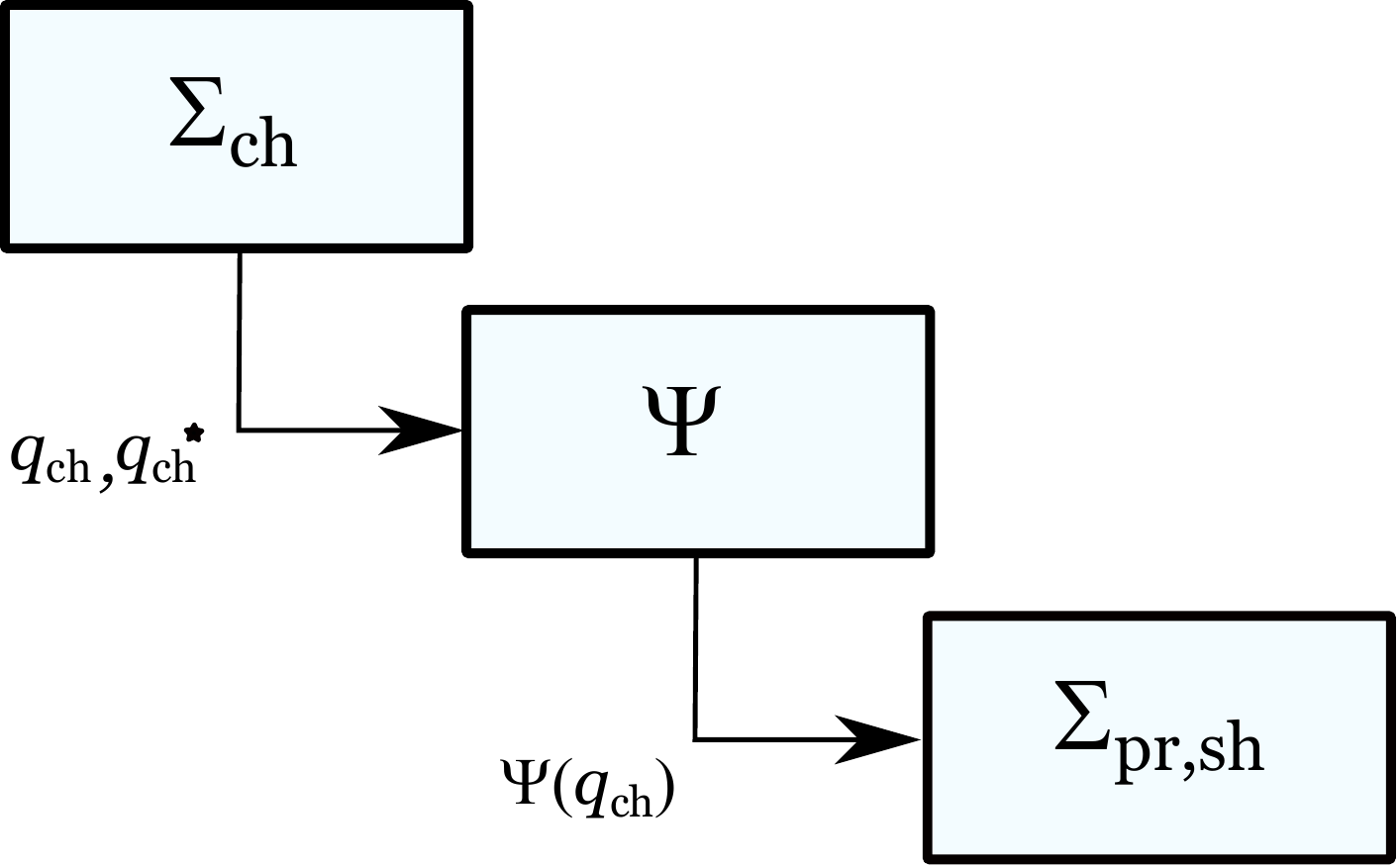}
\caption{Cascade interconnection between $\Sigma_\mathrm{ch}$ in \eqref{eq:Sigma_ch} and $\Sigma_\mathrm{pr,sh}$ in \eqref{eq:Sigma_prsh_prime} through the linear mapping $\Psi$ in \eqref{eq:Psi}.}
\label{fig:4}
\end{center}
\end{figure}

\begin{remark}
In the preceding proof, the subsystems  $\Sigma_\mathrm{ch}$ and $\Sigma_\mathrm{pr,sh}$ were shown to be globally asymptotically stable if they are decoupled, i.e., if $\Psi(q_\mathrm{ch})=0$ for all time. Notwithstanding, the result \cite[Proposition~4.1]{sepulchre_constructive_book} allows us only to claim local stability of the overall coupled system. In view of this drawback, part of our current research efforts are aimed at providing estimates of the system's domain of attraction. 
\end{remark}

\begin{remark}
Since the matrices $M_\mathrm{ch}$, $N_\mathrm{ch}$, $N_\mathrm{pr}$,  $N_\mathrm{sh}$, $M_\mathrm{a}$ and $M_\mathrm{b}$ are all diagonal, then  dynamic controllers \eqref{eq:control_q_ch_1} and \eqref{eq:controller_qpr_vsh} are fully decentralized.
\end{remark}

\section{Numerical Simulations}\label{sec:4}

In this section the performance of the DH system model in closed-loop with the proposed controller is illustrated via numerical simulations. We have used the configuration and data of the case study  reported in \cite[Section~4]{jmc_dh_modeling_2020}, which corresponds to a DH system with three heat producers ($n_\mathrm{pr}=3$),  nine consumers ($n_c=9$) and with the same topology as the sketch shown in Fig.~\ref{fig:1}. Thus, $n_\mathrm{ch}=17$. All producers are interfaced to the distribution network through storage tanks; each tank is assumed to have a total capacity of $1000~\mathrm{m}^3$.


The tuning gains of the dynamic controllers \eqref{eq:control_q_ch_1} and \eqref{eq:controller_qpr_vsh} are taken as $M_\mathrm{ch}=N_\mathrm{ch}=10^5 I_{n_\mathrm{ch}}$, and $N_\mathrm{pr}=7.11\times 10^4I_{n_\mathrm{pr}}$, $N_\mathrm{sh}=7.5\times 10^{-3} I_{n_\mathrm{pr}}$, $M_\mathrm{a}=14.06\times 10^{-5} I_{n_\mathrm{pr}}$ and $M_\mathrm{b}=7.11\times 10^{7} I_{n_\mathrm{pr}}$, respectively. This selection is based on a trial-and-error procedure aimed at attaining a fair balance between settling time and overshoot for the signals of interest. We note that, with the purpose of keeping the entries of $q_\mathrm{pr}$ within a sensible domain and to avoid, for example, flow reversals or too high flow rates, we have clipped the components of $u_\mathrm{pr}$ so each of them lie within 3\% and 115\% of a given  nominal equilibrium value of $u_{\mathrm{pr}}$ at maximum consumer demand.

We provide now a detailed explanation of the  simulation results that are shown in Figs. \ref{fig:sims_1} and \ref{fig:sims_2}. The system is initialized in the vicinity of system's equilibrium representing a context of low consumer demand  (25\% w.r.t. full demand) and with relatively small setpoints for each component of $V_\mathrm{sh}$. For simplicity we are assuming that the consumers' heat demand is proportional to their flow setpoints. Convergence is observed after a short transient. At $t=6$h all storage tanks switch to a charging mode and attain their respective desired  volume at approximately $t=9$h. The tanks switching to a charging mode causes  an increase in the producers'  pump actuation $u_\mathrm{pr}$ through the duration of the process, after which $u_\mathrm{pr}$ returns to its associated equilibrium value. At $t=12$h the reference value $q_{ch}^\star$ is changed to represent a context of high consumer demand (from 25\% to 95\% w.r.t. full demand). The plot of $q_{ch}$ shows that  convergence is achieved relatively quickly. Moreover, there is no significantly large overshoot for the consumers' pump actuation $u_\mathrm{ch}$ relative to its new equilibrium value. We note that the change in $q_\mathrm{ch}^\star$ induces a peak in the entries of $u_\mathrm{pr}$ but they return to their equilibrium values after a short time. At $t=18$h the tanks switch now to a discharging mode that ends at  approximately $t=21$h. During this process, it is possible to see a reduction in producers' pump actuation, contrary to what is observed during the tanks' charging mode. The new, lower values for the entries of $V_\mathrm{sh}^\star$ are maintained until the end of the simulation.


\begin{figure}
\begin{subfigure}{\linewidth}
\includegraphics[width=\textwidth]{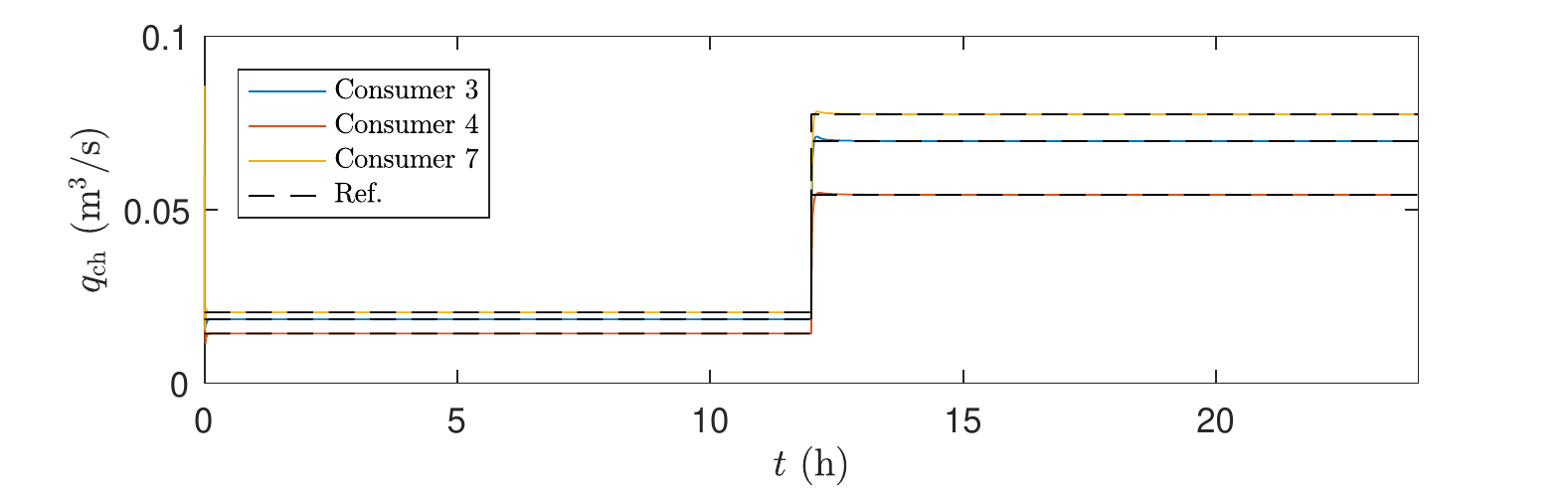}
\end{subfigure}
\begin{subfigure}{\linewidth}
\includegraphics[width=\textwidth]{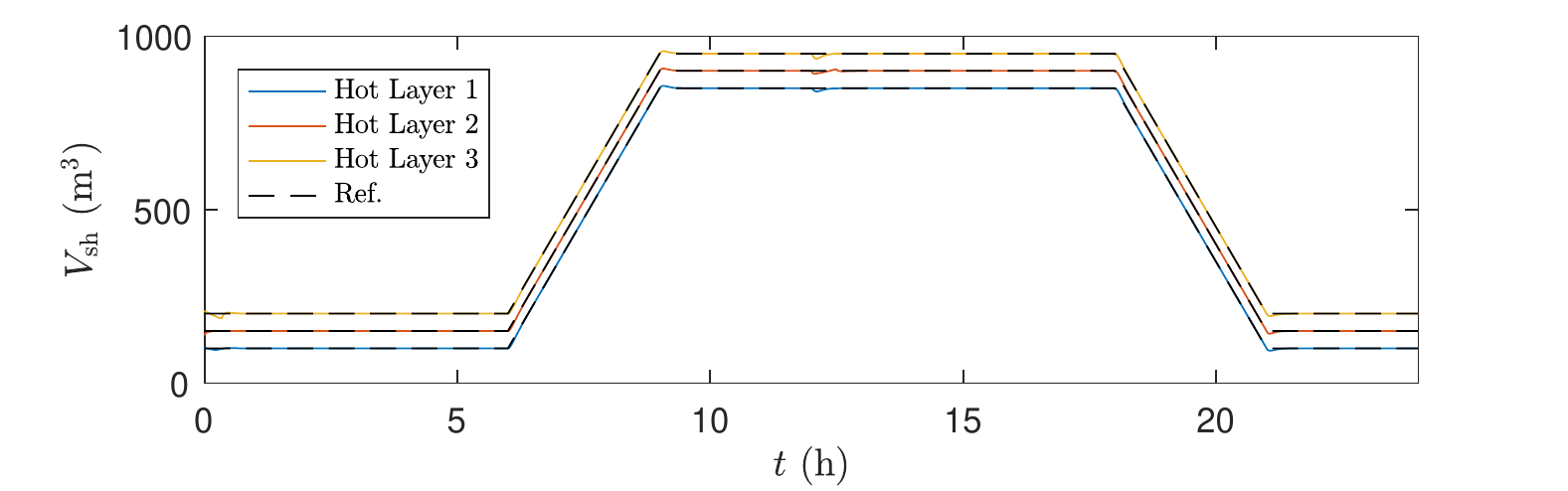}
\end{subfigure}
\caption{Evolution of the flow vector $q_\mathrm{ch}$ (top) and of the volume of hot water in the storage tanks (bottom).}
\label{fig:sims_1}
\end{figure}

\begin{figure}
\begin{subfigure}{\linewidth}
\includegraphics[width=\textwidth]{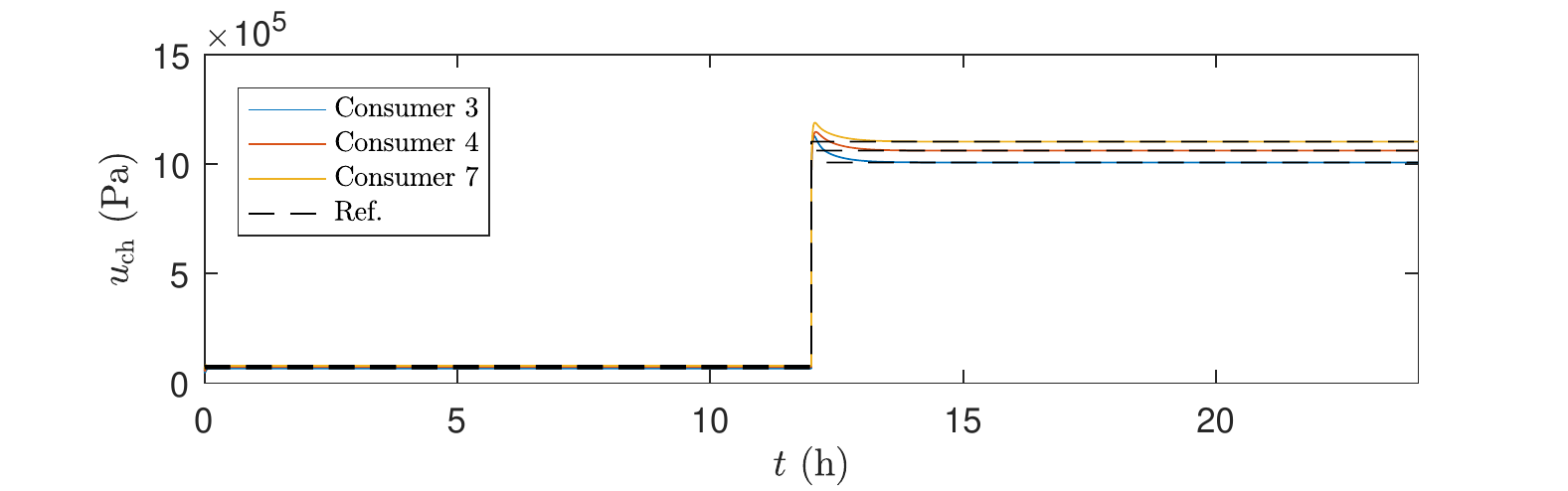}
\end{subfigure}
\begin{subfigure}{\linewidth}
\includegraphics[width=\textwidth]{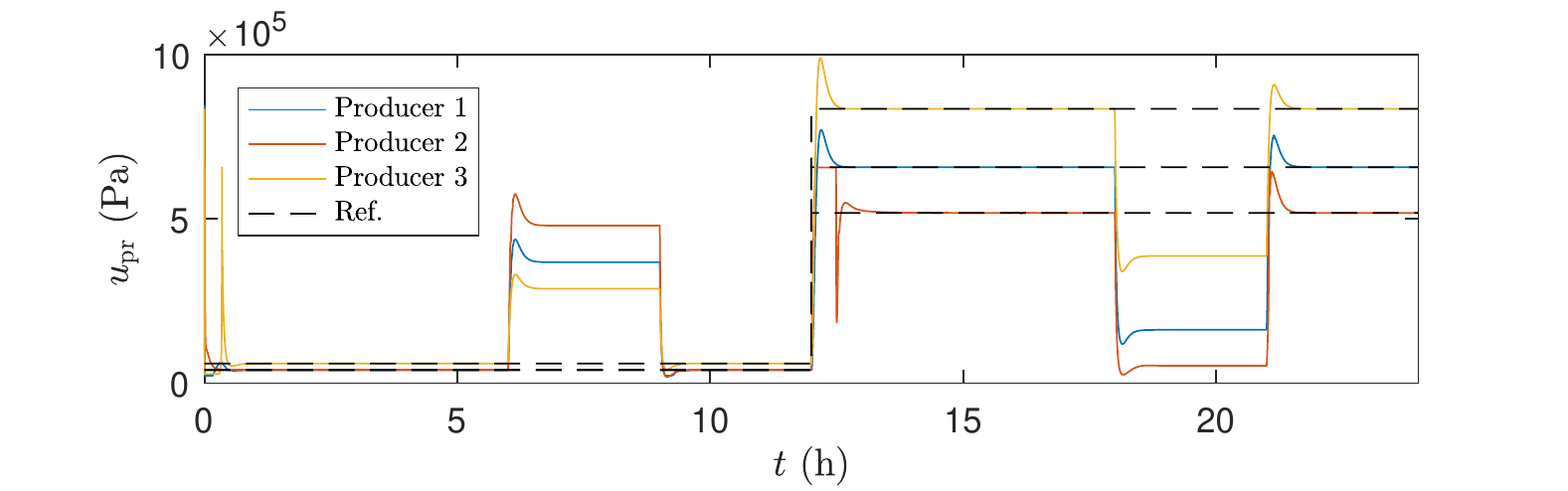}
\end{subfigure}
\caption{Evolution of the DH system's control inputs $u_\mathrm{ch}$ (top) and $u_\mathrm{pr}$ (bottom).}
\label{fig:sims_2}
\end{figure}

\section{Conclusions}\label{sec:5}

In this work we have addressed the flow and storage volume regulation of a multi-producer DH system via a novel adaptive decentralized control scheme which offers closed-loop (local) stability guarantees and overcomes the nonlinear, networked and uncertain characteristics of the considered system model, {\color{black}notably stemming from our consideration of frictional effects in pipes.} Our theoretical findings have been satisfactorily supported by simulation results on a realistic case study. Part of our ongoing research is related to the establishment of estimates of the closed-loop system's domain of attraction, {\color{black} the identification of gain tuning rules to adjust diverse performance criteria, e.g., the speed of convergence of $x_\mathrm{b}$ towards $\theta$, and the compatibility of our control design procedure (and closed-loop stability analysis) with other, more general, parameter and disturbance estimation schemes  (see, e.g., \cite{kolmanovsky2006simultaneous}), to formally study and address the effects of measurement noise, which were not considered in the present work}.

\AtNextBibliography{\small}
\printbibliography

\end{document}